\documentclass[a4paper,12pt]{article}
\usepackage{amsmath,amssymb,latexsym,amstext,amsthm}
\usepackage{graphicx,subfig,sidecap}
\usepackage{enumerate}
\usepackage{times}
\usepackage{float}
\usepackage{natbib}
\usepackage{hyperref}
\usepackage{authblk}  
\usepackage{soul} 
\usepackage{xcolor} 

\definecolor{azulUC3M}{RGB}{0,0,102}
\definecolor{gray97}{gray}{.97}
\definecolor{gray75}{gray}{.75}
\definecolor{gray45}{gray}{.45}
\usepackage{listings}
\lstnewenvironment{listing}[1][]
   {\lstset{#1}\pagebreak[0]}{\pagebreak[0]}
 
\lstdefinestyle{consola}
   {basicstyle=\scriptsize\bf\ttfamily,
    backgroundcolor=\color{gray75},
   }
 
\lstdefinestyle{C}
   {language=C,
   }
\usepackage[top=2cm]{geometry}
\newtheorem{proposition}{Proposition}
\newtheorem{remark}{Remark}

\newcommand{\I}{\mathcal{I}}
\newcommand{\C}{\mathcal{C}}
\newcommand{\W}{\mathcal{W}}

\newcommand{\E}{\mathbb{E}}
\newcommand{\R}{\mathbb{R}}
\usepackage[normalem]{ulem}
\usepackage{cancel}
\usepackage{hyperref}

\title{Depth-based estimation for multivariate functional data with phase variability}

\author[1]{Ana Arribas-Gil}
\author[2]{Sara López-Pintado,}
\affil[1]{{\small Departmento de Estadística, Universidad Carlos III de Madrid, Getafe, Spain}}
\affil[2]{{\small Department of Health Sciences,  Northeastern University, Boston,USA}}

\begin{document}
\maketitle

\begin{abstract}
In the context of multivariate functional data with individual phase variation, we develop a robust depth-based approach to estimate the main pattern function when cross-component time warping is also present. In particular, we consider the latent deformation model (Carroll and M\"{u}ller, 2023) in which the different components of a multivariate functional variable are also time-distorted versions of a common template function. Rather than focusing on a particular functional depth measure, we discuss the necessary conditions on a depth function to be able to provide a consistent estimation of the central pattern, considering different model assumptions. We evaluate the method performance and its robustness against atypical observations and violations of the model assumptions through simulations, and illustrate its use on two real data sets. 
\end{abstract}
\noindent%
{\it Keywords:} Multivariate functional data; time-warping; curve registration; data depth; robustness

\section{Introduction}\label{intro}
Functional data is increasingly being collected in many fields of science, such as medicine, biology, and economics. Statistical methods have been developed in the last several decades to address different problems, especially in the univariate functional data analysis (FDA) setting (see Ramsay and Silverman, 2005, Ferraty and Vieu, 2006, Wang et al. 2016). Some of these methodological developments are functional principal component analysis \citep{FPCA, FPCA2}, functional regression \citep{Freg, Freg2, Freg3}, and functional classification and clustering \citep{Fclass,Fclass2,Fclus,kmeans_alig}. A particular aspect of functional data is the presence of different types of variation, such as phase and amplitude variation \citep{PhaseAmp}. In the presence of phase variation, also known as \emph{time warping}, the objective is to unmask an underlying common pattern by aligning or registering the observed curves. A usual model is one in which observed curves are the composition of this unknown common pattern with individual \emph{warping} functions or time distortions. 
Different approaches to curve alignment include landmark methods \citep{landmark1}, continuous monotone registration \citep{RamLi}, moments registration \citep{moment_reg}, pairwise registration \citep{TangMuller_Bio} or  the Fisher–Rao metric framework \citep{FisherRao, FisherRao2}, all of which ultimately rely on solving a minimization problem, while Bayesian resampling registration aims at retrieving the posterior distribution of the warping functions \citep{BayReg}. All of these strategies work by first estimating the warping functions, then aligning the curves, and finally estimating the common underlying pattern from the registered sample. However, an efficient alternative that avoids registration consists of estimating the unknown common pattern directly from the observed sample, as proposed in \cite{Loubes, RobustTW}. In particular, the latter is based on functional depth measures, ensuring robustness of the estimator in the presence of atypical observations. \\
The notion of data depth for functional data is a powerful nonparametric tool that has been extensively studied in the last few decades. It measures the centrality or representativeness of an observation within a distribution or sample. Therefore, it provides ways of ranking functional data from center-outward and of extending robust statistics, such as the median, trimmed means, or rank tests to univariate and multivariate functional data. A notion of data depth needs to satisfy desirable properties, such as maximality at center and monotonicity with respect to the center, non-degeneracy and different types of invariance. For an overview of different functional depth measures and the analysis of their properties, see \cite{NietoReyesBattey,GijbelsNagy, IntegratedDepth}. 

In this paper, when we consider the multivariate functional framework, we deal with multiple functional variables observed per individual which might also present phase variability. In particular, two sources of phase variability are now possible, between individuals and between components of the multivariate process, which might present a common underlying pattern. This is the case, for instance, in growth analysis where the growth of different physiological structures is monitored for a set of individuals \citep{Carroll}. This situation also arises in the analysis of multivariate distributional data, in which the different marginal distributions belong to the same parametric family \citep{distributional}. 

In this paper, we focus on the \emph{latent deformation model}, denoted LDM and introduced in \cite{latentDef}, where each multivariate function are time-distorted versions of a common template, and in addition each subject has its own phase and magnitude variation. The individual time warping is assumed to be the same for each subject across all components, which makes the overall warping \emph{separable} into its individual and component-based origin allowing to estimate each source of phase variability separately. \cite{latentDef} proposed consistent estimators for all elements of the model, based on registration methods across components and in the pooled global sample.

Our contribution is two-fold. First, we propose an alternative estimation model in the LDM that relies in the notion of functional depth and deepest function and generalises the approach of \cite{RobustTW} to the multivariate setting, allowing for a robust and efficient registration-free procedure. Indeed, instead of first solving multiple minimization problems to retrieve the warping functions and then estimate the common underlying pattern, the depth-based approach allows for direct estimation of this target, from which we can then obtain the individual and component based time distortions. Second, we study the necessary properties that a functional depth measure needs to satisfy to guarantee consistent estimation of the common target function in the considered time-warping framework. In particular, we extend the results of \cite{RobustTW} for the modified band depth \citep{BandDepth} to a broader class of functional depths.

The rest of the article is organised as follows: In Section \ref{sec:model} the LDM is presented. In Section \ref{est} we develop our proposal, by first reviewing the univariate time-warping model and establishing the conditions that a functional depth needs to satisfy to provide consistent estimation. Then we propose a depth-based estimation approach for the multivariate LDM, deriving its consistency under certain assumptions on the individual and component- based warping processes. We also develop a graphical tool, the WHyRA (Warping Hypograph Ranking Agreement) plot, as an exploratory diagnostic tool to help visualize and check the hypothesis of the LDM. In particular it is designed to validate the preservation of the individual warping functions across components, from the warping estimates. In Section \ref{sim_sect} we conduct an extensive simulation study in which we investigate the small sample performance of our method and its robustness under violations of the assumptions of the LDM and in the presence of outlying trajectories. We also compare the results with those obtained with the method proposed in \cite{latentDef}. In Section \ref{appli} we demonstrate how our procedure can be used to analyse multivariate functional real data for which inter-individual and inter-component phase variability is present. In particular, we study data on ice extent on the Artic Ocean over time and maternity-age distributions across Europe, and illustrate the relevance of the proposed methodology to identify atypical observations or detect distortions from the assumptions of the LDM. We finalize the article with a discussion in Section \ref{disc}.

\section{Latent deformation model}\label{sec:model}
In this section we present the latent deformation model for multivariate functional registration, as proposed in \cite{latentDef}.

Let $\bf{ X} = (X_1,\ldots,X_p)$ be a $p$-dimensional random process with values on $L_2({\I})^p$, where ${\I}=[T_1, T_2]$ is a real interval. Consider a random sample of this process, $\{\bf{ X}_i\}_{i=1}^n$, $\bf{ X}_i(t)= (X_{i1}(t), \ldots, X_{ip}(t))$. The latent deformation model assumes a common structure among components of the p-dimensional curves and individual time-warping such that

\begin{equation}\label{model}
X_{ij}(t) = a_{ij} \lambda (g_{ij}(t)) = a_{ij} \lambda (\psi_j (h_i (t))) =  a_{ij} (\lambda \circ \psi_j \circ h_i) (t),
\end{equation}
for $i=1,\ldots,n,\, j=1\ldots,p$, where
\begin{itemize}
\item $\lambda$ is an unknown deterministic function in $\C(\I)$, referred to as \emph{latent curve} or \emph{amplitude function}
\item $a_{ij}$ are i.i.d. realizations of a real-valued scale variable $A$
\item $\psi_j$ are unknown deterministic component-based distortion functions 
\item $h_i$ are i.i.d. realizations of a time-warping process $H$.
\end{itemize}
Both distortion and time-warping functions belong to the convex space ${\W_{{\I}}}=\{g:{\I}\rightarrow {\I} \,|\, g\in {\C}^2({\I}),\, g(T_1)=T_1,\,g(T_2)=T_2\}$ and so does the composition of any pair of them, $g_{ij}(t) = (\psi_j \circ h_i) (t)$, $i=1,\ldots,n,\, j=1\ldots,p$. \\
Notice that the component-based distortion functions $\psi_j\in{\W_{{\I}}}$, $j=1,\ldots,p$, are considered to be deterministic, and so is the unknown amplitude function $\lambda\in\C(\I)$.

Given an observed sample of multivariate functions generated from (\ref{model}), the aim is to recover the unknown amplitude function $\lambda$, which accounts for the latent structure of the process. It is also of interest to estimate the component-based distortions  $\psi_j$, which allows to estimate each component's main pattern, $\gamma_j=\lambda\circ \psi_j$. Finally, the estimation of the individual warping functions, $h_i$, allows to align or register the observed curves in each of the components of the process. Notice that the way in which the overall time distortion in model (\ref{model}) can be decomposed in the two distortion elements  $\Psi_j$ and $h_i$, is referred to as \emph{separability} in \cite{latentDef}.

An example of curves generated under this model is presented in Figure \ref{fig_model}.
\begin{figure}
\begin{center}
\includegraphics[width=14cm]{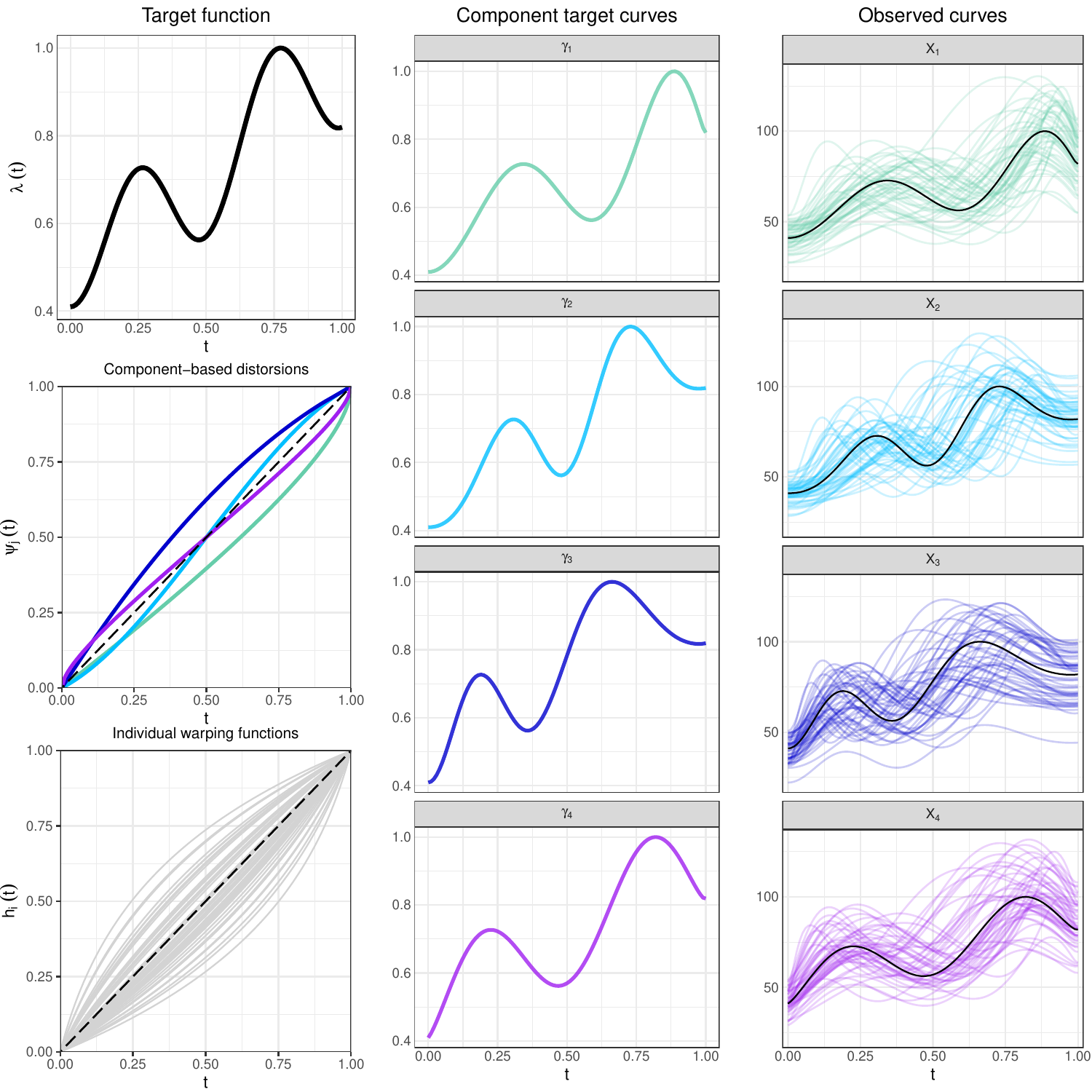}
\end{center}
\caption{Illustration of the latent deformation model with $p=4$ and $n=50$. Left: target function (top) and component-based distortion functions ($\psi_j$'s). Center: component target functions ($\gamma_j$'s). Right: Observed data with component target functions as a black solid line.}\label{fig_model}
\end{figure}

\section{Depth-based medians in the latent deformation model}\label{est}
\subsection{Univariate functional case}
An alternative registration-free estimation method in the univariate \emph{time-warping} model is based on the use of functional depth-based medians, as proposed in \cite{RobustTW}. The authors showed that the \emph{modified band depth (MBD)} for functional data \citep{BandDepth} provides a consistent way of estimating the target function in a common amplitude time-warping model. Indeed, assume that observed functions are given by 
\begin{equation}\label{univ_TW} 
X_i(t)=\lambda\circ h_i(t)=\lambda(h_i(t)), \qquad t\in \I,\quad i=1,\dots,n
\end{equation}
where $h_i$ are i.i.d. realizations of a general warping process $H$, and $\lambda$ is a deterministic unknown amplitude function in $\C(\I)$. Then, the \emph{modified band depth median} of the observed curves, $\hat{m}^{MBD}_{X_{1:n}}$, which is the observation with highest modified band depth within the sample, is a consistent estimator of $\lambda$, as long as $\lambda$ is strictly monotone. This comes from the fact that modified band depth is preserved through composition with strictly increasing functions, so $\hat{m}^{MBD}_{X_{1:n}} =  \hat{m}^{MBD}_{\lambda\circ h_{1:n}}= \lambda \circ \hat{m}^{MBD}_{h_{1:n}}$ (\citep{BandDepth, RobustTW}). If we assume that the population modified band depth median of the warping process $H$ is the identity function on $\I$, that is $m^{MBD}_H (t) = t$, $t\in\I$, then the convergence of $\hat{m}^{MBD}_{\{X_{1:n}\}}$ towards $\lambda$ follows from the consistency properties of the sample modified band depth median towards its population counterpart. Notice that $m^{MBD}_H =id_{\I}$ is a natural assumption equivalent to common time-warping model identifiability assumption $E[H^{-1}]=id_{\I}$, where $id_{\I}$ denotes the identity function on $\I$, $id_{\I}(t)=t$, $t\in\I$ \citep[see][]{RobustTW,latentDef}.

The use of a functional depth-based median in the time-warping model provides a simple way of estimating the target function without having to register the observed time distorted curves. In addition, it has been shown to be a more robust estimation method against atypical curves than those approaches based on mean estimation \citep{RobustTW}.

In fact, the result on the consistency of the modified band depth median with respect to the target/amplitude function $\lambda$ in the univariate common-amplitude model (\ref{univ_TW}) can be extended to any functional depth measure that is invariant under strictly monotone transformations, as stated in the following proposition. The proof is provided in the Appendix.

In what follows we will denote by ${\cal P}(S)$ the set of all probability measures on an arbitrary measurable spaces $S$ equipped with some fixed $\sigma$-algebra. For a random variable $X$ on $S$, we denote by $P_X$ its probability distribution.

\begin{proposition}\label{prop_univ}
    Let $FD: L_2({\I}) \times {\cal P} (L_2({\I}))\longrightarrow \R:\, (x,P)\mapsto FD(x,P)$ be a functional depth measure with empirical version denoted by $FD_{X_1,\ldots,X_n}(x)$. Let $X$ denote a general random variable with values in $L_2({\I})$ and distribution $P_X$. Also, define the sample median as $\hat{m}^{FD}_{X_{1:n}}=\arg\max_{X_i} FD_{X_1,\ldots,X_n}(X_i)$, and the population median, $m^{FD}_X=\arg\max_{X\in L_2}FD(\cdot,P_X)$. If $FD$ satisfies
\begin{enumerate}
\item[P1.] For any continuous strictly monotone function $g:\R \longrightarrow \R$ and any $x\in  L_2({\I})$, 
$$FD(x,P_X) = FD(g(x),P_{g(X)}).$$ 
\item[P2.] Consistency of the sample median, $\hat{m}^{FD}_n$, to the population median, $m^{FD}_X$. That is, under some conditions on $P_X$, $\hat{m}^{FD}_{X_{1:n}} \stackrel{a.s.}{\longrightarrow} m^{FD}_X$ as $n\longrightarrow \infty$.
\end{enumerate}
Then, under the same conditions on $P_X$ required in (P2), for a random sample $X_1,\ldots,X_n$ generated under model (\ref{univ_TW}) and satisfying $m_H^{FD}=id_{\I}$,
$$\hat{m}^{FD}_{X_{1:n}}\stackrel{a.s.}{\longrightarrow} \lambda \qquad \mbox{ as } \quad n\longrightarrow \infty$$
for $\lambda$ strictly increasing and continuous.
\end{proposition}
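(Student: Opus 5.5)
The argument reduces the statement to (P2) applied to the warping sample $h_1,\ldots,h_n$ rather than to the observations. Under model (\ref{univ_TW}), $X_i=\lambda\circ h_i$ with $\lambda$ continuous and strictly increasing. We extend $\lambda$ from $\I$ to a continuous strictly increasing map $g:\R\to\R$, for instance linearly with slope one outside $\I$. Since every $h_i$ takes values in $\I$, we have $g(h_i)=\lambda\circ h_i=X_i$ pointwise, and $g(H)=\lambda\circ H$ as random elements. The proof then has three steps.

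\textbf{Step 1 (transfer of the sample median).} Property (P1) is stated for population depths. We first check that it also holds for the empirical depth. The empirical depth $FD_{X_1,\ldots,X_n}(x)$ equals $FD(x,\hat P_n)$ with $\hat P_n$ the empirical measure. The push-forward of the empirical measure of $h_{1:n}$ under $x\mapsto g\circ x$ is the empirical measure of $X_{1:n}$. Hence (P1) gives
$$FD_{X_1,\ldots,X_n}(X_i)=FD_{h_1,\ldots,h_n}(h_i)\quad\text{for all } i.$$
The maximizing index is therefore the same for both samples, and
$$\hat m^{FD}_{X_{1:n}}=\lambda\circ\hat m^{FD}_{h_{1:n}}.$$
Ties can be handled with any common tie-breaking rule.

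\textbf{Step 2 (consistency for the warpings).} The conditions on $P_X$ required in (P2) are to be understood as conditions on $P_H$. Under them, (P2) gives $\hat m^{FD}_{h_{1:n}}\to m^{FD}_H=id_{\I}$ almost surely, in the topology implicit in (P2), namely the $L_2(\I)$ norm.

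\textbf{Step 3 (continuous mapping).} We need $\lambda\circ\hat m^{FD}_{h_{1:n}}\to\lambda\circ id_{\I}=\lambda$. Composition $x\mapsto\lambda\circ x$ must be shown continuous at $id_{\I}$ in $L_2$ on the set of functions with values in $\I$. Since $\lambda$ is continuous on the compact set $\I$, it is bounded and uniformly continuous. For $\varepsilon>0$, choose $\delta$ from uniform continuity and split $\I$ into the set where $|x(t)-t|<\delta$ and its complement. Chebyshev's inequality bounds the Lebesgue measure of the complement by $\|x-id_{\I}\|_2^2/\delta^2$. This gives
$$\|\lambda\circ x-\lambda\|_2^2\le \varepsilon^2|\I|+4\|\lambda\|_\infty^2\,\|x-id_{\I}\|_2^2/\delta^2,$$
which tends to $\varepsilon^2|\I|$ as $x\to id_{\I}$. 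This is a dominated-convergence type argument; if (P2) holds in sup norm, uniform continuity gives the conclusion immediately. Combining the three steps yields $\hat m^{FD}_{X_{1:n}}\to\lambda$ almost surely.

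The main obstacle is Step 1. Making the empirical version of (P1) rigorous requires that the empirical depth be the population depth at the empirical measure, or else that (P1) be assumed in sample form. It also requires that (P1) be applicable with $g$ defined on all of $\R$, which is why $\lambda$ is extended beyond $\I$. Once this identity is in place, Steps 2 and 3 are routine.
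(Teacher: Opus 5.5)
Your proof is correct and follows the paper's argument. As in the paper, you use (P1) to identify $FD_{X_1,\ldots,X_n}(X_i)$ with $FD_{h_1,\ldots,h_n}(h_i)$, so that $\hat{m}^{FD}_{X_{1:n}}=\lambda\circ\hat{m}^{FD}_{h_{1:n}}$. You then apply (P2) to the warpings together with $m_H^{FD}=id_{\I}$, and pass through $\lambda$ by continuous mapping. Three of your details are left implicit in the paper, which simply cites (P1) and the continuous mapping theorem: the extension of $\lambda$ to $\R$, the reading of the empirical depth as $FD(\cdot,\hat{P}_n)$, and the explicit $L_2$-continuity estimate for $x\mapsto\lambda\circ x$.
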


Many well-known univariate functional depths, such as the band depth and modified band depth, satisfy conditions (P1) and (P2) from Proposition 1 (see \citep{BandDepth}).  In particular, the consistency property (P2) of the median function is a standard property that most notions of functional depth satisfy \citep[see, for instance][]{NietoReyesBattey,GijbelsNagy}. However, invariance with respect to strictly monotone transformations has not been broadly addressed, but holds for many types of functional depths (including integrated and non-integrated depths). Note that the well known integrated and infimal types of functional depths are defined in terms of the distribution of the univariate depth measures on the function values over the time domain, calculated with respect to the marginal probability distributions. In these settings, the invariance property can be inherited. In particular, it can be seen that for any integrated or infimal depth which are based on either integrating or taking the infimal of univariate depths at each time $t$, if the univariate depth considered is based on ranks, such as, univariate halfspace or simplicial depth, and not on distances (Mahalanobis depth) then the monotonicity property (P1) is trivially satisfied.

Indeed, let  $D: \R \times {\cal P} (\R)\longrightarrow \R:\, (x,P)\mapsto D(x,P)$ denote a univariate depth measure. A general functional integrated depth is defined from $D$ as the Lebesgue integral of the univariate depth values over the functional domain. That is, $FD_D(x,P_X)=\int_{\cal I} D(x(t),P_{X(t)}) dt$ where  $P_{X(t)}$ denotes the marginal probability distribution of $X$ at $t\in {\cal I}$.
Then, as long as the univariate depth measure $D$ is invariant with respect to strictly monotone transformations, the integrated functional depth will be invariant too (see \citep{IntegratedDepth} for a discussion of other invariance properties for integrated functional depths).\\
Some non-integrated functional depths can also be characterized in terms of the values of a univariate depth measure on the functional values over the time domain. The infimal depth (or $\Phi$-depth) \citep{Mosler2013}, for instance, is defined as the infimum of such values. Again, if the univariate depth measure is invariant with respect to strictly monotone transformations, so is the associated infimal depth. In this sense, the recently proposed family of \emph{quantile integrated depths, QID}\citep{QID}, which is obtained by integrating up to the K-th quantile of the univariate depths, also satisfies this property. Note that by definition QID includes both integrated and infimal depths.\\
Another functional measure for which this is true is the extremal depth \citep{NarisettyNair}, since it is defined in terms of the cdf of the univariate depths on the function values over the time domain. 

Consistency of the functional median to the amplitude function in the time-warping model (\ref{univ_TW}) can be extended to the general case of a non strictly monotone function $\lambda$ by means of a monotonizing transformation of the sample curves \citep{Loubes}. In fact, if $T: {\cal C}({\cal I}) \longrightarrow {\cal M}({\cal I})$ is an operator taking a general continuous function into a strictly monotone function such that $T(x\circ h)=T(x) \circ h$ for any $x\in \C(\I)$ and any $h\in \W(\I)$, we say that $T$ \emph{preserves warping functions} and the following result holds.
\begin{proposition}\label{prop_univ_gen}
Let $X_1,\ldots,X_n$ be a random sample generated under model (\ref{univ_TW}), with $\lambda$ a general continuous function on $\I$. Let $FD$ be a functional depth measure satisfying assumptions P1) and P2) in Proposition \ref{prop_univ} and for which $m_H^{FD}=id_{\I}$, and let $T: {\cal C}({\cal I}) \longrightarrow {\cal M}({\cal I})$ be a monotonizing operator preserving warping functions.\\
Then, under the same conditions on $P_X$ required to establish the consistency of $\hat{m}^{FD}_n$,
$$\arg\max_{X_i} FD_{T(X_1),\ldots,T(X_n)}(T(X_i))\stackrel{a.s.}{\longrightarrow} \lambda \qquad \mbox{ as } \quad n\longrightarrow \infty.$$
\end{proposition}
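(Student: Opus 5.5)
The plan is to reduce the statement to Proposition \ref{prop_univ}, using the fact that $T$ preserves warping functions, and to read the $\arg\max$ as selecting an \emph{index}. Let $i^*_n$ be the index that maximizes $FD_{T(X_1),\ldots,T(X_n)}(T(X_i))$. The estimator is then the original observation $X_{i^*_n}=\lambda\circ h_{i^*_n}$, which is what must converge to $\lambda$. By the preservation property, $T(X_i)=T(\lambda\circ h_i)=T(\lambda)\circ h_i$ for every $i$. So the transformed sample follows model (\ref{univ_TW}) with amplitude $\tilde\lambda:=T(\lambda)\in{\cal M}(\I)$, which is strictly monotone and continuous, and with the same warping functions $h_1,\ldots,h_n$.

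\textbf{Step 1: identify the selected index.} Since each $h_i$ takes values in $\I$, I extend $\tilde\lambda$ to a continuous strictly monotone function $g:\R\to\R$ (for instance linearly outside $\I$). Then $T(X_i)=g(h_i)$ pointwise. Applying P1 to the empirical distributions gives $FD_{g(h_1),\ldots,g(h_n)}(g(h_i))=FD_{h_1,\ldots,h_n}(h_i)$ for all $i$. Here I use that the empirical depth is the population depth evaluated at the empirical measure, and that $P_{g(h)}$ is the image of $P_h$. Hence $i^*_n$ is also the index of the sample depth median of the warping functions, i.e.\ $h_{i^*_n}=\hat m^{FD}_{h_{1:n}}$. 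If there are ties, the same tie-breaking rule is used on both sides, so the two maximizers coincide.

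\textbf{Step 2: consistency of the warping median.} By P2, applied to the sample $h_1,\ldots,h_n$ of $H$ under the stated conditions, together with the hypothesis $m^{FD}_H=id_{\I}$, we get $h_{i^*_n}\stackrel{a.s.}{\longrightarrow} id_{\I}$. Equivalently, Proposition \ref{prop_univ} applied to the transformed sample already yields $T(X_{i^*_n})\to T(\lambda)$.

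\textbf{Step 3: transfer the limit back to $\lambda$.} We have $X_{i^*_n}=\lambda\circ h_{i^*_n}$. Since $\lambda$ is continuous on the compact interval $\I$, it is uniformly continuous and bounded.
\begin{itemize}
\item If the convergence in Step 2 is uniform, then $\|\lambda\circ h_{i^*_n}-\lambda\|_\infty\to 0$ directly.
\item If it holds only in $L_2(\I)$, then $h_{i^*_n}\to id_{\I}$ in measure. Uniform continuity of $\lambda$ gives $\lambda\circ h_{i^*_n}\to\lambda$ in measure. Boundedness of $\lambda$ on $\I$ then yields $L_2$ convergence by dominated convergence, applied along subsequences.
\end{itemize}
In either case $X_{i^*_n}\stackrel{a.s.}{\longrightarrow}\lambda$.

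The main obstacle is the interpretation in Step 3. The transformed median $T(X_{i^*_n})$ converges to $T(\lambda)$, not to $\lambda$, and $T$ is generally not invertible. The argument therefore has to go through the identity $h_{i^*_n}=\hat m^{FD}_{h_{1:n}}$ from Step 1 and then undo the warping on the \emph{untransformed} curve. A secondary technical point is the mode of convergence supplied by P2, which determines which version of Step 3 is needed.
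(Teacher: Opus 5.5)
Your proof is correct and follows essentially the same route as the paper. You use $T(X_i)=T(\lambda)\circ h_i$ together with P1 to show that the selected index $i^\star$ is the index of the depth median of $h_1,\ldots,h_n$, so the estimator equals $\lambda\circ\hat m^{FD}_{h_{1:n}}$, and you then conclude via P2 and $m_H^{FD}=id_{\I}$; your Step 3 just spells out, through uniform continuity and boundedness of $\lambda$ on $\I$, the continuity of $h\mapsto\lambda\circ h$ that the paper invokes as the continuous mapping theorem.
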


The previous result, which proof is given in the Appendix, establishes that for a general amplitude function $\lambda$, the sample curve whose monotonized transformation corresponds to the functional median in the monotonized sample, provides a consistent estimator of $\lambda$. This result had already been proven in \cite{RobustTW} for the modified band depth and is now extended to a general functional depth measure under the conditions of Proposition \ref{prop_univ}. Details on a specific warping preserving monotonizing operator $T$ can be found in \cite{Loubes} and \cite{RobustTW}.

\subsection{Robust estimation in the latent deformation model}\label{estimators}
Now, in the multivariate functional case, we consider, without loss of generality, the common amplitude latent deformation model,
\begin{equation}\label{model_camp}
X_{ij}(t) = \lambda (g_{ij}(t)) = \lambda (\psi_j (h_i (t))) = (\lambda \circ \psi_j \circ h_i) (t),\quad t\in \I
\end{equation}
$i=1,\ldots,n$, $j=1,\ldots,p$, which is obtained after normalization of observed curves in the general latent deformation model, that is, $X_{ij} = X^{\star}_{ij} / \|X^{\star}_{ij}\|_{\infty}$, with $X^{\star}_{ij}$ generated by (\ref{model}). Let $FD$ denote a general functional satisfying the assumptions of Proposition \ref{prop_univ}.

Notice that each of the components of the process defines a univariate time-warping model in which each component's main pattern, $\gamma_j=\lambda\circ \psi_j$, acts as the latent curve. Therefore, for $j=1,\ldots, p$, $\gamma_j$ can be estimated as 
\begin{equation}\label{gam}
\hat{\gamma_j} = \hat{m}^{FD}_{X_{1:n,j}}=\arg\max_{X_{ij}} FD_{X_{1j},\ldots,X_{nj}}(X_{ij})
\end{equation}
if $\lambda$ is strictly increasing, or as
\begin{equation}\label{gam2}
\hat{\gamma_j} = \arg\max_{X_{ij}} FD_{T(X_{1j}),\ldots,T(X_{nj})}(T(X_{ij}))
\end{equation}
in the general case, according to Propositions \ref{prop_univ} and \ref{prop_univ_gen}.

Once $\gamma_j$ has been estimated, we can proceed to obtain an estimate of $h_i$ in each component of the process, therefore obtaining $p$ different estimates which can subsequently be aggregated via the sample mean as proposed by \cite{latentDef}:
\begin{equation}\label{warp_est}
\hat{h}_i = \frac{1}{p} \sum_{j=1}^p (\hat{\gamma_j}^{-1} \circ X_{ij}), \quad i=1,\ldots,n.
\end{equation}
Alternatively, and particularly suitable for large values of $p$, the component estimates of $h_i$, $\hat{h}_{ij} = \hat{\gamma_j}^{-1} \circ X_{ij}$ could be aggregated in a robust way by considering the functional median or trimmed mean.

Regarding the estimation of $\lambda$, as noted by \cite{latentDef}, we can combine all the components of the process to obtain
\begin{equation}\label{combined}
X_{ij}(t) =  (\lambda \circ \psi_j \circ h_i) (t) = \lambda \circ g_{ij},\quad t\in \I\end{equation}
where $g_{ij} = \psi_j \circ h_i \in \W_{{\I}}$ since $\psi_j\in \W_{{\I}}$ and $h_i\in \W_{{\I}}$ for all $j=1,\ldots,p$ and $i=1,\ldots,n$.
Let $\Psi$ denote a functional uniformly discrete random variable taking values in $\{\psi_1,\ldots,\psi_p\}$. Then, selecting at random one of its $p$ component curves for each individual in the sample generated by (\ref{combined}) guarantees independence and it is equivalent to observing the curves
\begin{equation}
Z_{i}(t) =  \lambda \circ g_{i},\quad t\in \I, \quad i=1,\ldots,n\end{equation}
where now $g_i$, $i=1,\ldots,n$, are i.i.d. realizations of the random variable $G = \Psi \circ H$. Under the conditions on $H$, $\Psi$ and $FD$ established in Proposition \ref{composition} below, we have $m_G^{FD} = id_{\I}$, and therefore we can consistently estimate $\lambda$ as 
\begin{equation}\label{lamb_est_1}
\hat{\lambda} = \hat{m}^{FD}_{Z_{1:n}}
\end{equation}
if $\lambda$ is strictly increasing, or as
\begin{equation}\label{lamb_est_2}
\hat{\lambda} = \arg\max_{Z_{i}} FD_{T(Z_{1}),\ldots,T(Z_{n})}(T(Z_{i}))
\end{equation}
in the general case, according to Propositions \ref{prop_univ} and \ref{prop_univ_gen}.

\begin{proposition}\label{composition}
Let $H$ be a warping process on $\W_{{\I}}$ such that for each $t\in \I$, $H(t)$ is an absolutely continuous random variable on $\I$ with central symmetry with respect to $t$, and let $\Psi$ be a warping process on $\W_{{\I}}$ such that for each $t\in \I$, $\Psi^{-1}(t)$ is centrally symmetric around $t$. 
Then, the composition $G = \Psi \circ H$ is a warping process on $\W_{{\I}}$ satisfying $m_G^{FD} = id_{\I}$, for any quantile integrated depth (QID) measure $FD$\citep{QID}.
\end{proposition}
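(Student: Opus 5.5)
The plan is to reduce the statement to a pointwise property of the marginal distributions of $G(t)$, and then use the structure of quantile integrated depths. A QID is computed from the univariate depths $D(x(t),P_{X(t)})$, $t\in\I$, by integrating up to the $K$-th quantile of these values; the infimal and fully integrated depths are the extreme cases. The univariate depth $D$ is rank based (halfspace or simplicial), so it is maximized at a median of $P_{X(t)}$. It therefore suffices to show:
\begin{enumerate}
\item[(a)] for every $t\in\I$, the point $t$ is a median of $G(t)$, so that $D(t,P_{G(t)})$ attains its maximum possible value;
\item[(b)] this makes $id_{\I}$ a maximizer of $FD(\cdot,P_G)$.
\end{enumerate}
Step (b) follows from monotonicity of the aggregation. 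If a function $x$ satisfies $D(x(t),P_{G(t)})\ge D(y(t),P_{G(t)})$ for all $t$ and every competitor $y$, then every quantile of $t\mapsto D(x(t),P_{G(t)})$ dominates the corresponding quantile for $y$. Hence the truncated integral defining the QID is also larger for $x$. Since $id_{\I}\in\W_{\I}$ is an admissible function, it is a maximizer, and we obtain $m_G^{FD}=id_{\I}$. Uniqueness can be recovered under the usual nondegeneracy: strict maximality of the univariate depth at a unique median, which absolute continuity of $H(t)$ provides.

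For step (a), fix $t$. Because each realization $\psi$ of $\Psi$ is strictly increasing, we have $\{\psi(H(t))\le t\}=\{H(t)\le \psi^{-1}(t)\}$. Conditioning on $\Psi$, and using that $H$ and $\Psi$ are independent,
$$P(G(t)\le t)=E\big[F_{H(t)}(\Psi^{-1}(t))\big],$$
where $F_{H(t)}$ is the continuous cdf of $H(t)$. Central symmetry of $H(t)$ about $t$ gives $F_{H(t)}(t+u)=1-F_{H(t)}(t-u)$. Writing $U=\Psi^{-1}(t)-t$, central symmetry of $\Psi^{-1}(t)$ about $t$ means $U\stackrel{d}{=}-U$. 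Then
$$E[F_{H(t)}(t+U)]=E[F_{H(t)}(t-U)]=E[1-F_{H(t)}(t+U)],$$
so $P(G(t)\le t)=1/2$. By continuity (no atom at $t$), also $P(G(t)\ge t)=1/2$. Thus $t$ is a median of $G(t)$. If $\Psi$ is discrete, as in the application, the conditioning is a finite average and the same computation applies.

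Before the main argument, I will check that $G\in\W_{\I}$ almost surely. This is immediate, because composition preserves the endpoints, strict monotonicity and $\C^2$ regularity.

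I expect the main obstacle to be step (b) for halfspace or simplicial depth: I must confirm that these depths are maximized exactly at a median. For halfspace depth, $D(x,P)=\min\{F(x),1-F(x^-)\}$, which equals $1/2$ exactly at medians, so this case is clear. For simplicial depth, $D(x,P)=2F(x)(1-F(x^-))$ for continuous $F$, which is maximized when $F(x)=1/2$. I would verify both cases explicitly and state the result for the rank-based univariate depths used in QID.
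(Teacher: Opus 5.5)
Your proposal is correct and follows essentially the paper's route: a symmetry computation showing that $t$ is a median of each marginal $G(t)=\Psi(H(t))$, followed by the QID maximality step. The differences are minor and mostly in your favour.
\begin{itemize}
\item You condition on $\Psi$ and use $E[F_{H(t)}(t+U)]=E[F_{H(t)}(t-U)]$ with $U=\Psi^{-1}(t)-t$. The paper instead conditions on $H(t)$, integrates $P(\Psi(u)\le t)$ against $f_{H(t)}$ and substitutes $v=2t-u$. Your version needs only continuity of $F_{H(t)}$, and it states the independence of $H$ and $\Psi$ that the paper uses silently.
\item You prove the QID step directly, by pointwise domination of the univariate depths, instead of citing Theorem~3 of the QID reference. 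This makes explicit that the univariate depth must be maximized at a median. That requirement is genuinely needed here, since $G(t)$ is not symmetric and, e.g., its mean need not be $t$.
\end{itemize}

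One aside should be corrected. Absolute continuity of $H(t)$ does not make the median of $G(t)$ unique; it only gives $P(G(t)=t)=0$.

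For instance, let $\Psi$ take two equiprobable values with $\Psi^{-1}(t)=t\pm a$, and let $H(t)$ be supported in $(t-k_t,t+k_t)$ with $k_t<a$. Then $G(t)\le \Psi(t+k_t)<t$ or $G(t)\ge\Psi(t-k_t)>t$, according to the realization of $\Psi$. So $G(t)$ puts no mass on an interval around $t$, and every point of that interval is a median. Arranging this for all $t$ in some subinterval, functions that deviate from $id_{\I}$ only inside these gaps are also QID maximizers.

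Uniqueness therefore requires an additional hypothesis. One sufficient condition is that, with positive probability, $\Psi^{-1}(t)$ lies in an open interval on which $f_{H(t)}>0$; this makes $F_{G(t)}$ strictly increasing at $t$.

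The paper's proof has the same blind spot: it asserts that $t$ is \emph{the} median and defers uniqueness to the cited theorem. So this is a missing assumption in the proposition rather than a flaw specific to your argument.
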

 The proof is given in the appendix.
\begin{remark} The central symmetry assumptions on both $\Psi^{-1}$ and $H$ are necessary in Proposition \ref{composition}. Indeed, milder assumptions such as $m_H^{FD} = id_{\I}$ and $m_{\Psi^{-1}}^{FD} = id_{\I}$ or $E[\Psi^{-1}]=id_{\I}$ fail to guarantee that $m_G^{FD} = id_{\I}$ if asymmetry in either $H$ or $\Psi$ exists. \\
In particular, notice that the condition $\frac{1}{p}\sum_{j=1}^p \psi_j^{-1}(t) = t$, which is the assumption on the component-based distortion functions required in \cite{latentDef}, is not enough to guarantee the results in Proposition \ref{composition}. 
\end{remark}

\begin{remark} Notice that the QID family of functional depths\citep{QID} include all integrated and infimal functional depths as special cases. 
\end{remark}

Finally, after $\lambda$ and $\gamma_j's$ have been estimated, we estimate each of the component-based distortion functions $\psi_j$ as
\begin{equation}
\hat{\psi}_j = \hat{\lambda}^{-1} \circ \hat{\gamma_j}, \quad j=1,\ldots,p
\end{equation}

It is important to note that we do not assume any parametric representation of the individual warping functions $h_i$ or the component-based distortion functions $\psi_j$. Instead, their estimates are obtained in practice via numerical function composition and inversion of monotonic functions, relying on linear interpolation.

\subsection{Visualizing the agreement of individual warping functions}\label{WHyRA_sec}

One of the main assumptions of the latent deformation model (\ref{model}) is that the individual time distortions $h_i$, $i=1,\dots,n$ are preserved along the components of the process. That is, each individual exhibits the same time warping in every component of the process. However, this assumption might not be realistic in many situations (see Section \ref{data2} for an example). Notice that the estimator of $h_i$ given in (\ref{warp_est}) relies on that assumption, but we can still estimate $\lambda$, $\gamma_j$ and $h_{ij}$ even if it does not hold. Since in the presence of phase variation individual warping functions carry relevant information that can be used as input in classification, outlier detection, or clustering methods, among others, it is important to understand when a single warping estimate is a good indicator of each individual's time distortion across components. Therefore, we propose a diagnosis tool to assess the validity of the assumption of the equality of individual warping functions, which will allow us to decide when it is appropriate to aggregate the estimates of $h_{ij}$.

We propose a graphical tool, the Warping Hypograph Ranking Agreement (WHyRA) plot, consisting in visualizing the rank correspondence between the warping function estimates of each individual at each component of the process. Since all $\hat{h}_{ij}\,'s$ are warping functions, that is, strictly increasing functions tied at the extremes of the observation interval $\I$, a suitable way of ranking them is by means of the modified hypograph index \citep{HalfRegion11}, which accounts for the proportion of curves that lie below the graph of a given one, and therefore introduces a downwards-upwards ranking of the curves. More precisely, for a sample of curves $x_1, \ldots, x_n \in L_2(\I)$, the modified hypograph index of  $x\in \{x_1,\dots,x_n\}$ is defined as $MHI_{\{x_1,\dots,x_n\}} (x) =\frac{1}{n} \sum_{i=1}^n \ell\left(\left\{t\in {\cal I} \,\middle|\, x_i(t) \leq x(t) \right\}\right)/\ell(\I)$, where $\ell$ denotes the Lebesgue measure in $\R$.\\
The WHyRA plot is built as a matrix of scatter plots representing $MHI_{\{\hat{h}_{1j},\dots,\hat{h}_{nj}\}} (\hat{h}_{ij})$ versus $MHI_{\{\hat{h}_{1k},\dots,\hat{h}_{nk}\}} (\hat{h}_{ik})$, $i=1,\ldots,n$, for each pair $j,k=1,\ldots,p$ with $j\neq k$.  An illustration of the WHyRA plot for $p=2$ on synthetic data is given in Figure \ref{WHyRA}.  The first row corresponds to data generated under the assumption that the individual warping functions are shared across both components. The second row of Figure 2 presents data generated from a model in which this assumption is violated (see the Figure 2 caption for details). As expected, the WHyRA plot clearly distinguishes between these two scenarios, revealing markedly different patterns under the differing assumptions.\\
For larger values of $p$, the average correlation coefficient across all $(j,k)$ pairs can be used as a measure of concordance, although for a small number of dimensions, visualization of the scatter plots is preferred, since it provides insight on the ranking concordance of the warping estimates but also about possible outlying individuals, as illustrated in Figure \ref{Mat_count}.

\begin{figure}
\hspace*{-1cm}\includegraphics[width=15cm]{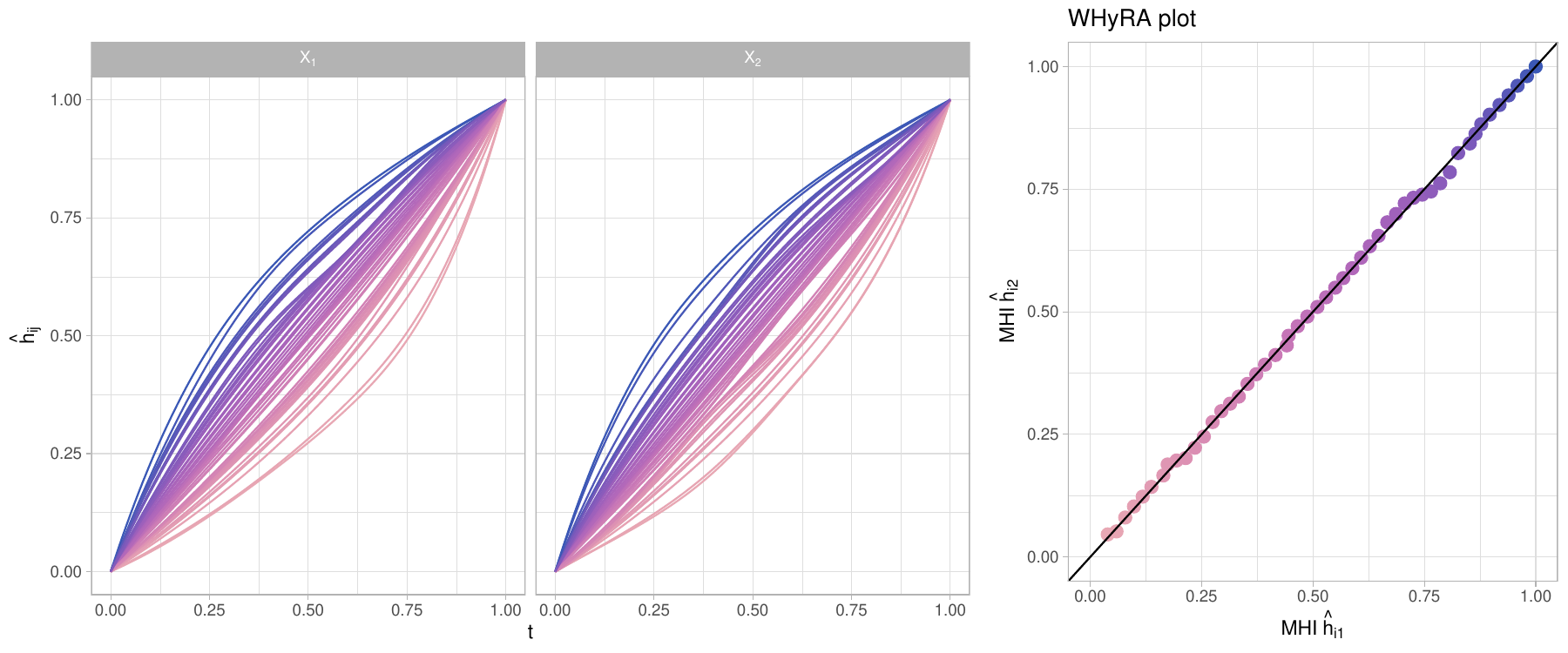}\hspace*{-1cm}\includegraphics[width=2.8cm,trim={0cm -7cm 0cm 2cm},clip]{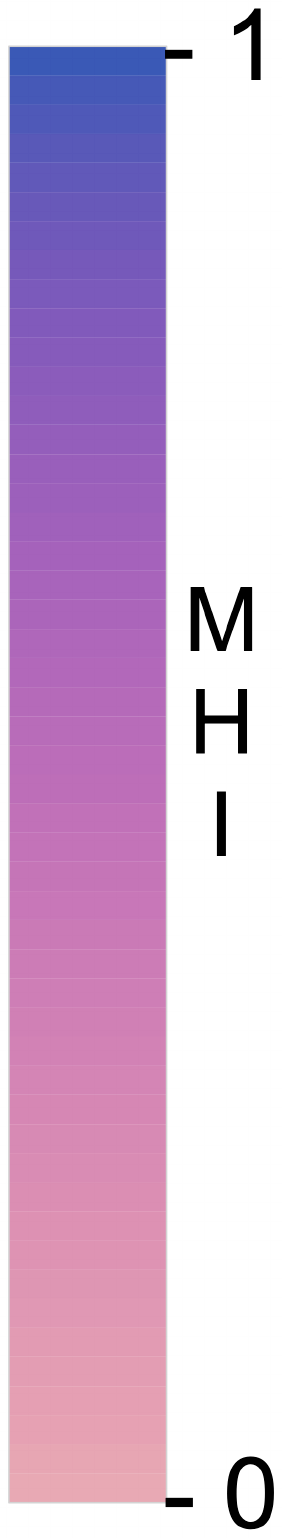}\\
\hspace*{-1cm}\includegraphics[width=15cm]{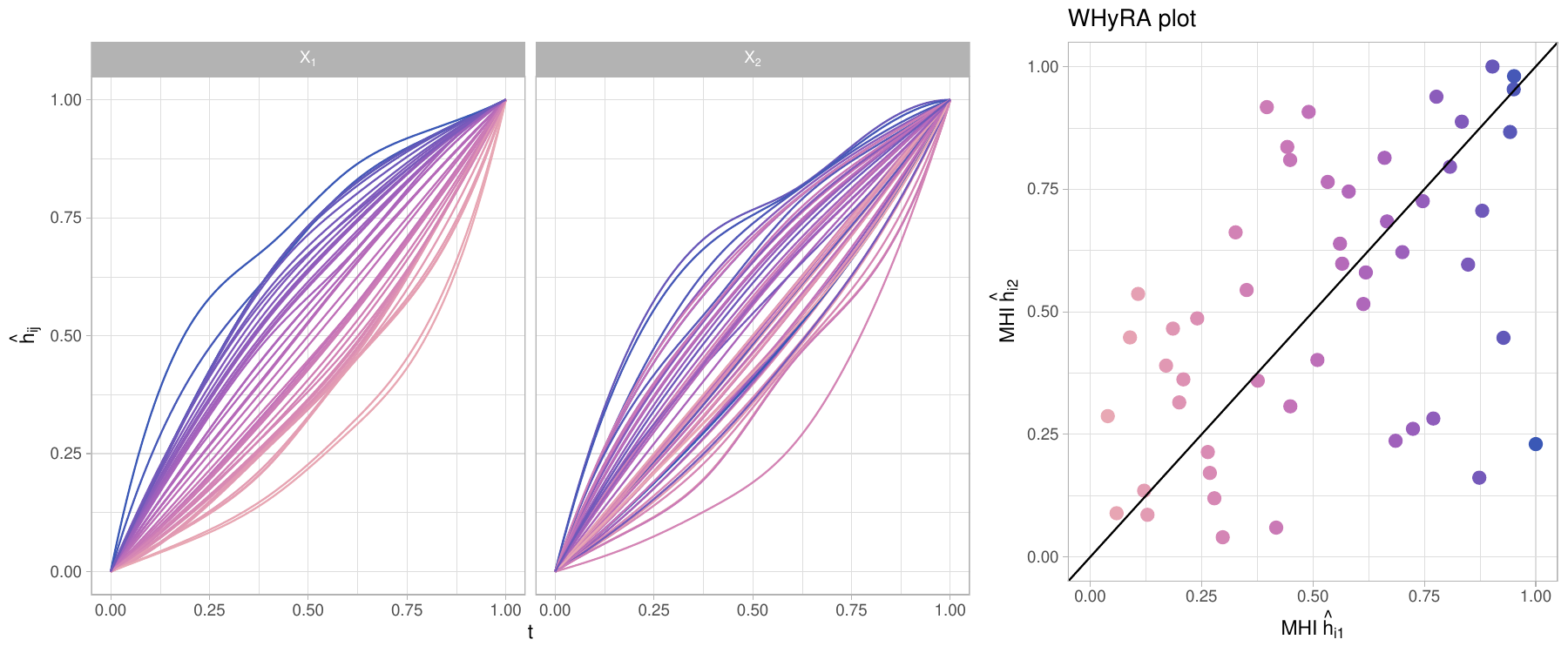}\hspace*{-1cm}\includegraphics[width=2.8cm,trim={0cm -7cm 0cm 2cm},clip]{paleta_mhi.pdf}\\
\caption{Warping function estimates and WHyRA plots from two synthetic data sets with $p=2$ and $n=50$. In each row, the first two plots present the warping function estimates $\hat{h}_{ij}$ for $j=1,2$, while the third one shows the scatter plot of the MHI values across them. Top: Data set generated according to model (\ref{model}), such that $h_i$, $i=1,\ldots,n$ are the same in the two components.  Bottom: Data set generated according to (\ref{sim_model}) with $\sigma_D=1$, so that the individual warping functions depend on $j=1,2$. Color code is defined as the modified hypograph value for each warping estimate in the first component.}\label{WHyRA}
\end{figure}

\section{Simulation study}\label{sim_sect}

In this section we analyse the behaviour of the proposed robust estimation method on synthetically generated data. In order to allow for departures from model (\ref{model}), data are generated from a more general model including both individual cross-component warping variation ($R_{ij}$) and error measurement, as proposed in \cite{latentDef}. In particular, data are generated according to model
\begin{equation}\label{sim_model}
X_{ij}(t_k) = A_{ij} (\lambda \circ \psi_j \circ h_i \circ r_{ij}) (t_k) + \varepsilon_{ijk},
\end{equation}
for $i=1,\ldots,n,\, j=1\ldots,p,\, k=1,\ldots,K$, where $t_0,\ldots,t_K$ is a collection of equispaced points in ${\cal T}=[0,1]$, with $K=101$, $n=50$ and $p=4,30$. Notice that the inclusion of the $r_{ij}$ functions breaks the assumption of shared individual warping functions across the different components of the process. Most of the elements of (\ref{sim_model}) are defined as in \cite{latentDef}, with one additional setting for each of $\Psi_j$'s and $h_i$'s, as described next:
\begin{itemize}
\item $\lambda(t) = \lambda_0(t)/||\lambda_0||_{\infty}$ with $\lambda_0(t) = 20 + 15 t^2 - 5 \cos(4 \pi t) + 3\sin(\pi t^2)$, $\,\,t \in {\cal T}$.
\item $A_{ij} \stackrel{iid}{\sim} {\cal N}(100,4)$, $\,\,i=1,\ldots,n,\,\,j=1,\ldots,p$.
\item Component-based distortion functions:
\begin{itemize}
\item $\psi$ setting 1: For $p=4$, $\psi_j(t)$, $j=1,\ldots,p$, are deterministic functions satisfying $\frac{1}{p}\sum_{j=1}^{p} \psi^{-1}_j(t) = t$. In particular, $\psi_j(t) = \frac{1}{2}Beta(t; a_j,b_j) + \frac{1}{2}t$, for $j=1,2$, and $\psi_3(t)$, $\psi_4(t)$ such that $\psi^{-1}_j(t)=2t- \psi^{-1}_{j-2}(t)$, $j=3,4$, where $Beta(t; a_j,b_j)\!\!=\!\!\dfrac{\Gamma(a_j+b_j)}{\Gamma(a_j)\Gamma(b_j)}x^{a_j-1}(1-x)^{b_j-1}$, with $a_1\!=\!2$, $b_1\!=\!2$, $a_2\!=\!2$, $b_2\!=\!\frac{1}{2}$. \\
\item $\psi$ setting 2: For $p=30$, $\psi_j(t)$, $j=1,\ldots,p$, are independent realizations of a random variable $\boldsymbol{\psi}$ satisfying $\E[ \boldsymbol{\psi}^{-1}(t) ]= t$, $\forall t$. In particular, we adopt the iterative warping generating procedure described by \cite{Loubes}. Namely, for $j=1,\ldots,p$ let $u_j\stackrel{iid}{\sim} U(10\varepsilon_W, 1-10\varepsilon_W)$ and $v_j \stackrel{iid}{\sim} U(u_i-\varepsilon_W, u_i+\varepsilon_W)$ where $\varepsilon_W$ is a warping parameter fixed to $\varepsilon_W=0.005$. Then, the component-based distortion functions are defined as $\psi^{-1}_j(t) =\varphi_i^{(M)} (t)$, where $\varphi_i^{(M)} (t)$ is obtained by iterating for $m=1,\ldots,M=2500$ the following process: 
\begin{equation}\label{wp1}
\varphi_j^{(m)}(t)=w\circ \varphi_j^{(m-1)}(t), \quad \mbox{with } w(t)=\left\{\begin{array}{ll} \dfrac{v_j}{u_j} t & \mbox{ if } 0\leq t \leq u_j \\
                                \dfrac{1-v_j}{1-u_j} t+ \dfrac{v_j-u_j}{1-u_j} & \mbox{ if } u_j < t \leq 1, \end{array}\right.
\end{equation} 
\end{itemize}

\item Individual warping functions: 
\begin{itemize}
\item $h$ setting 1: $$h_{i}^{-1}(t) = \dfrac{\exp (t w_{i})-1}{\exp (w_{i})-1}, \quad w_{i} \stackrel{iid}{\sim} {\cal N}(0,\sigma_W^2), \quad i=1,\ldots,n.$$
\item $h$ setting 2: $\quad h_{i}^{-1}(t) = \varphi_i^{(M)}(t)$, $\quad i=1,\ldots,n$,\vspace{0.3cm}\\with $\varphi_i^M$ generated as in (\ref{wp1}) ($M=2500$ and varying values of $\varepsilon_W$).
\end{itemize}
\item $r_{ij}^{-1}(t) = \dfrac{\exp (t d_{ij})-1}{\exp (d_{ij})-1}$, where $d_{ij} \stackrel{iid}{\sim} {\cal N}(0,\sigma_D^2)$,  $\,\,i=1,\ldots,n,\,\,j=1,\ldots,p$.
\item $\varepsilon_{ijk}\stackrel{iid}{\sim} {\cal N}(0,\sigma_E^2)$ $\,\,i=1,\ldots,n,\,\,j=1,\ldots,p,\,\,k=1,\ldots,K$.

\end{itemize}
The parameter values are set to $\sigma_W= 0.5, 1$, $\varepsilon_W= 0.005, 0.0075$, $\sigma_D=0,  0.5, 1$,  $\sigma_E= 0, 1, 5, 10$. Graphical representations of the component distortion functions $\psi_j$ and the individual warping functions $h_i$ are presented in Figure \ref{figs_sim}. Notice that the warping functions under setting 1 are always above or below the diagonal whereas those under setting 2 might cross the diagonal of $\I$. In general, for the chosen parameter value, setting 2 for $h_i$ represents a higher amount of individual warping. For both settings 1 and 2, the generated individual warping functions are i.i.d. realizations of some random variable $H$, satisfying $\E[ H^{-1} ]= id_{\I}$ and $m_H^{MBD}=id_{\I}$, where $MBD$ denotes the modified band depth. However, notice that the random variables $H$ defined by settings 1 and 2 do not have centrally symmetric marginals around $t$ for all values of $t \in \I$ (see Figure \ref{figs_sim}). The same happens for the component-based distortion functions used, so the conditions to ensure the result of Proposition \ref{composition} are not satisfied. Some simulated data sets for different settings and parameter values are presented in Figure \ref{figs_simdata}.

\begin{figure}
\begin{center}
\includegraphics[width=14cm]{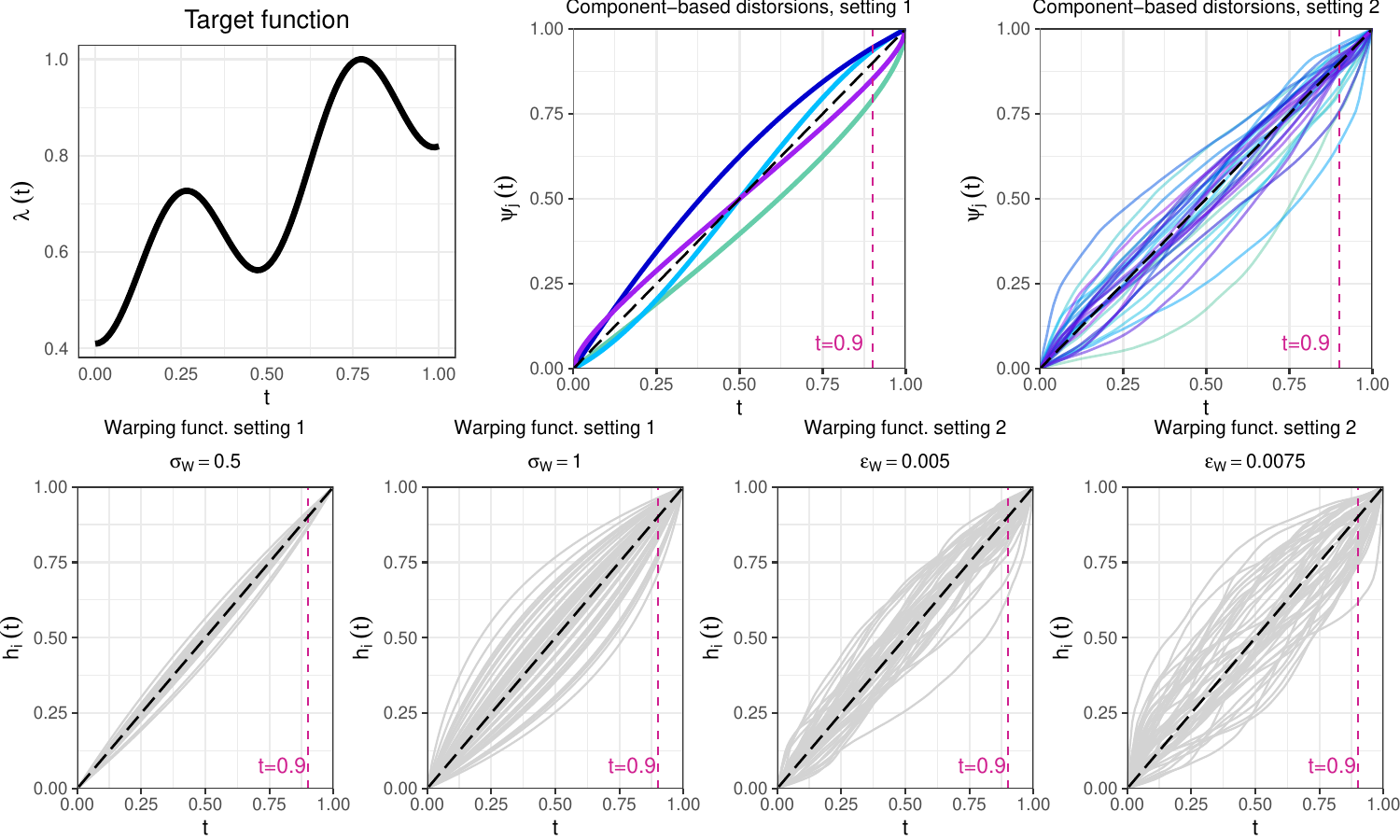}
\end{center}
\caption{Illustration of the elements of the simulation study. Top: target function (left) and component-based distortion functions ($\psi_j$'s) for setting 1 (middle, $p=4$) and setting 2 (right, $p=30$). Bottom: individual warping functions ($h_i$'s) for settings 1 and 2 and different parameter values. In all simulated component-based distortions and individual warping functions a transversal section at $t=0.9$ is presented to help visualize that the resulting univariate distribution is not symmetric around $t=0.9$. The black dashed line represents the diagonal of the interval $\I=[0,1]$.}\label{figs_sim}
\end{figure}

\begin{figure}
\begin{center}
\hspace*{-1cm}\includegraphics[width=15.5cm]{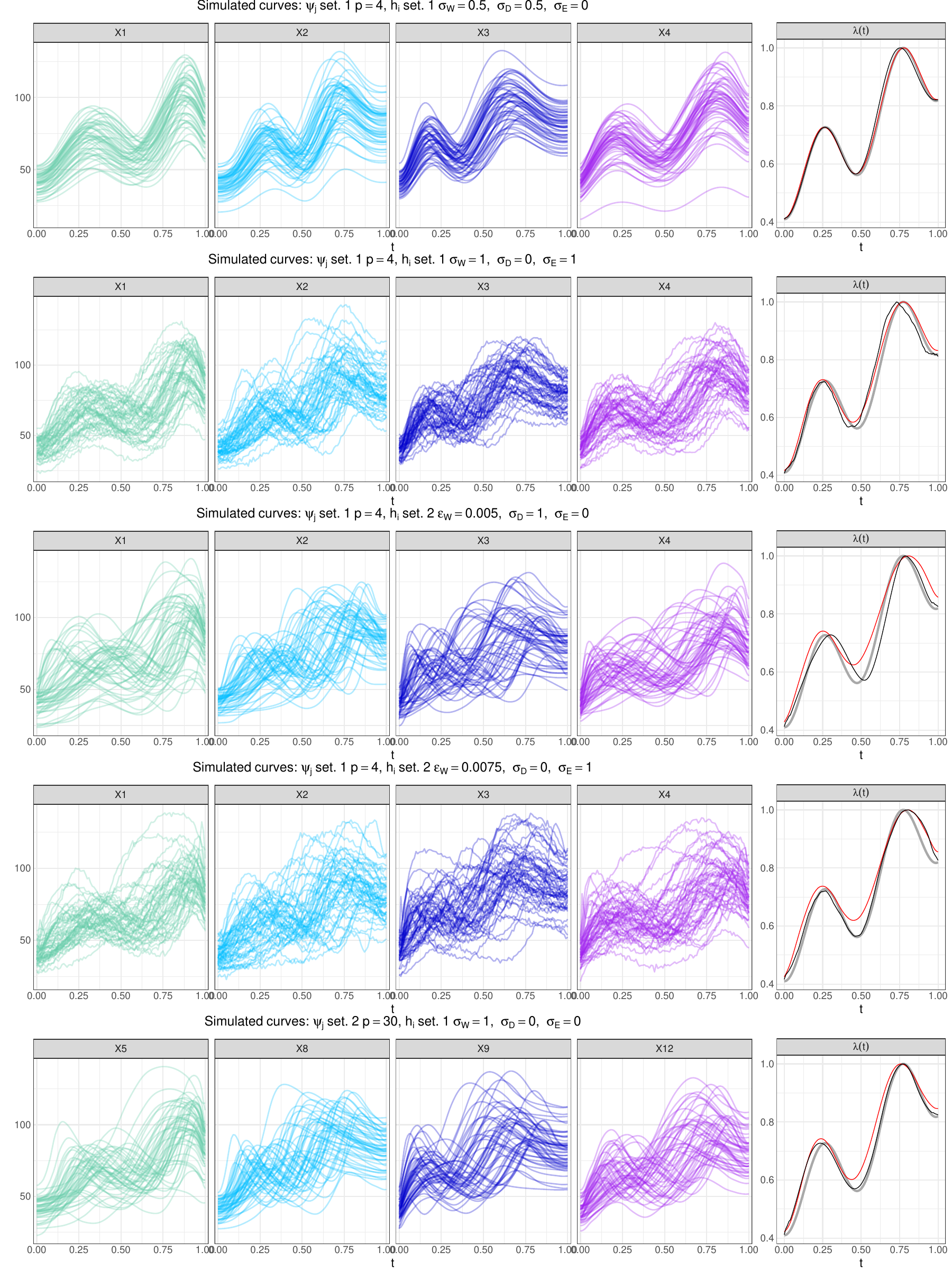}\end{center}
\caption{Five simulated data sets under model (\ref{sim_model}). For the last one, the number of components is $p=30$ and only four randomly selected components are shown. The fifth plot on each row represents the target function, $\lambda(t)$, as a thick gray solid line, together with the depth-based and the CM estimates (black and red lines, respectively).}\label{figs_simdata}
\end{figure}

For each simulation setting and simulation run, estimates of $\lambda$, $h_i$, $\gamma_j$ are obtained as described in Section \ref{estimators}, with the choice of the modified band depth as functional depth measure. For the estimator of $\lambda$, considering the pooled sample of all the observed curves across the $p$ components, rather than a single randomly selected representative for each individual, provides better small sample results, which are the ones shown here. We also consider $\hat{X}_{ij} = \hat{\gamma}_j \circ \hat{h}_i$ as an estimator of individual observations. Then, the following integrated error measures are computed:
 $$ LISE = \int_I (\hat{\lambda}(t)-\lambda(t))^2 dt\quad   \quad \quad HMISE =\dfrac{1}{n} \sum_{i=1}^n \int_I (\hat{h}_i(t)-h_i(t))^2 dt$$
 $$XMISE =\dfrac{1}{np} \sum_{i=1}^n\sum_{j=1}^p \int_I (\hat{X}_{ij}(t)-X_{ij}(t))^2 dt$$
Their mean values and standard deviations over $N=100$ simulation runs are presented in Tables \ref{res1}-\ref{res4} compared to those obtained with the estimation method of \cite{latentDef}, which will be referred to as CM method in the following.  The method works by first providing estimates $\hat{h}_{ij}$ of the individual warping functions by solving $p$ registration problems in each of the components of the multivariate sample, and then averaging them to obtain $\hat{h}_i$, $i=1,\ldots,n$, and $\hat{\gamma}_j = \frac{1}{n} \sum_{i=1}^n (X_{ij} \circ \hat{h}^{-1}_{i})$, $j=1\ldots,p$.  Notice that this corresponds to an inversion in the order of the steps to estimate the component target functions and individual warping functions with respect to our procedure (equations (\ref{gam}-\ref{gam2}) and (\ref{warp_est})).  Finally, $\lambda$ is obtained as the mean of the aligned global sample, after solving a registration problem on the pool sample of all the components. The method has been implemented here by using the \texttt{R} functions provided by the authors in the Supplementary Material of their article.

In view of the results, we can see how LISE is significantly lower for the CM method in the cases of slight to mild time warping. However, as warping variability ($\sigma_D$, $\varepsilon_W$) and/or additional distortion ($\sigma_D$) increases, our depth-based method presents better results. An illustration of the two $\lambda$ estimates in different simulation scenarios is presented in Figure \ref{figs_simdata}. This behaviour is consistent across different generation mechanisms for $\psi$ (settings 1 and 2) and different sample sizes ($n=50$ or $100$). Moreover, as the sample size increases, the reduction in LISE is noticeable larger for our depth-based method than for the CM estimator.

As for the integrated square error for the warping functions and the observed curves, HMISE and XMISE, our method provides consistently better results except for retrieving the observed curves in the presence of additive noise. In fact, it is noticeable how our $\lambda$ and $X_{ij}$ estimates are more sensitive to the presence of additive noise than the mean-based estimates, as expected. And this is despite incorporating a moving average smoothing pre-step in our estimation strategy. Larger values of $\sigma_E$ were tested that corroborate this (results not shown). 

Regarding the computation time, our method is significantly faster since it does not involve registration of the curves. In particular, the mean (standard deviation) times in seconds over 100 simulation runs are 0.17 (0.09), 0.32 (0.05), 1.12 (0.15) and 2.19 (0.22) for our procedure for $p=4$ and $n=50, 100$ and $p=30$ and $n=50, 100$,  respectively, and 11.02 (0.49),  44.55 (0.76),  85.39 (2.21) and 349.60 (8.60) for the CM method for the same sample sizes (\texttt{R} version 4.5.1 running on an Apple M1 8-core CPU with 16Gb of memory).

The specific details of the implementation used in this simulation study can be found in the R code available at \url{https://github.com/aarribasuc3m/DepthBased_Multivariate_FunctionalTimeWarping}.

\subsection{Simulations under outlier contamination}

In order to evaluate the robustness of our method to the presence of outliers, we now present an extension of the previous simulation study in which a proportion $c$ of the $n$ observations are generated from (\ref{sim_model}) with a different common amplitude function $\lambda_c(t)$ defined as
\begin{equation}\label{lamb_cont}
\lambda_c(t) = \lambda^c_0(t)/||\lambda^c_0||_{\infty}\,\mbox{ with }\,\lambda^c_0(t) = t + \exp\{-25(t-0.5)^2\}, \,\,t \in {\cal T}=[0,1].
\end{equation}
The proportion of contaminated observations in each simulation run is given by $c=0.05$ or $0.1$. The rest of the simulation settings are the same as described above. 

The results are summarized in Tables \ref{res_outliers} and \ref{res_outliers2}, and illustrated in Figure \ref{figs_simdata_outs} in the Appendix. For the sake of conciseness, and since different specifications seem to provide similar outcomes as discussed above, we only present results obtained with $n=50$, $\psi$ generated under setting 1 ($p=4$), and $\sigma_E = 0$. We can observe how the LISE remains almost unchanged across different contamination levels and the error in the estimation of the warping functions increases only slightly for our depth-based approach, whereas the performance of the CM method is more importantly affected by the presence of the outlying observations.

\begin{table}
$\mathbf{\psi}$ {\bf setting 1.} $\quad\mathbf{n=50}, \quad\mathbf{p=4}$\vspace{0.3cm}\\
\tiny{
\hspace{-0.65cm}\begin{tabular}{cc|cc|cc|cc}
    \bf{ $\mathbf{h_i}$ setting 1}&&\multicolumn{6}{l}{\bf{ Additive noise: $\sigma_E=0$}}\\
    \hline
  Warping & Nuisance  &\multicolumn{2}{l|}{LISE $\times 10^3$}&\multicolumn{2}{l|}{HMISE$\times 10^3$}&\multicolumn{2}{l}{XMISE}\\
   distortion & distortion & $\hat{\lambda}$&$\hat{\lambda}_{CM}$& $\hat{h}_i$&$\hat{h}_{i,CM}$& $\hat{X}_{ij}$&$\hat{X}_{ij,CM}$\\
    \hline
    &$\sigma_D\!=\!0$&2.04  ( 0.63 )&\bf 0.06  ( 0.04 )&\bf 0.02  ( 0.02 )&0.05  ( 0.04 )&\bf 0.11  ( 0.03 )&0.28  ( 0.07 )\\
    \cline{2-8} 
    $\sigma_W\!=\!0.5$&$\sigma_D\!=\!0.5$&0.91  ( 0.8 )&\bf 0.08  ( 0.05 )&\bf 0.13  ( 0.03 )&0.21  ( 0.06 )&\bf 6.7  ( 0.79 )&6.97  ( 0.78 )\\
    \cline{2-8} 
    &$\sigma_D\!=\!1$&\bf 0.32  ( 0.4 )&0.99  ( 0.47 )&1.73  ( 0.31 )&\bf 1.54  ( 0.26 )&\bf 74.97  ( 7.25 )&78.77  ( 8.48 )\\
     \hline
    &$\sigma_D\!=\!0$&\bf 0.5  ( 0.88 )&1.15  ( 0.94 )&\bf 0.39  ( 0.47 )&1.22  ( 0.71 )&\bf 1.65  ( 1.03 )&16.59  ( 8.54 )\\
     \cline{2-8} 
   $\phantom{iii}\sigma_W\!=\!1\phantom{iii}$&$\sigma_D\!=\!0.5$&\bf 0.38  ( 0.42 )&1.16  ( 0.8 )&\bf 0.4  ( 0.32 )&1.28  ( 0.61 )&\bf 8.02  ( 1.19 )&23.41  ( 8.63 )\\
    \cline{2-8} 
   &$\sigma_D\!=\!1$&\bf 0.86  ( 0.97 )&3.17  ( 1.1 )&\bf 1.84  ( 0.49 )&2.84  ( 0.66 )&\bf 71.67  ( 6.25 )&100.16  ( 11.93 )\\
    \hline
    \end{tabular}}\vspace{0.3cm}\\
\tiny{
\begin{tabular}{cc|cc|cc|cc}
\bf{ $\mathbf{h_i}$ setting 1}&&\multicolumn{6}{l}{\bf{ Additive noise: $\sigma_E=1$}}\\    \hline
  Warping & Nuisance  &\multicolumn{2}{l|}{LISE $\times 10^3$}&\multicolumn{2}{l|}{HMISE$\times 10^3$}&\multicolumn{2}{l}{XMISE}\\
   distortion & distortion & $\hat{\lambda}$& $\hat{\lambda}_{CM}$& $\hat{h}_i$&$\hat{h}_{i,CM}$& $\hat{X}_{ij}$&$\hat{X}_{ij,CM}$\\
    \hline
    &$\sigma_D\!=\!0$&2.32  ( 0.89 )&\bf 0.06  ( 0.03 )&0.08  ( 0.03 )&\bf 0.04  ( 0.03 )&2.51  ( 0.39 )&\bf 1.44  ( 0.08 )\\
     \cline{2-8}
    $\sigma_W\!=\!0.5$&$\sigma_D\!=\!0.5$&1.78  ( 1.09 )&\bf 0.07  ( 0.04 )&0.19  ( 0.05 )&0.19  ( 0.05 )&8.93  ( 0.91 )&\bf 7.94  ( 0.8 )\\
     \cline{2-8}
    &$\sigma_D\!=\!1$&1.01  ( 0.92 )&\bf 0.87  ( 0.39 )&1.65  ( 0.28 )&\bf 1.49  ( 0.27 )&\bf 76.48  ( 6.48 )&78.16  ( 7.22 )\\
     \hline
    &$\sigma_D\!=\!0$&1.33  ( 1.16 )&\bf 0.98  ( 0.67 )&\bf 0.39  ( 0.29 )&1.17  ( 0.6 )&\bf 4.78  ( 1.13 )&17.73  ( 7.55 )\\
     \cline{2-8}
$\phantom{iii}\sigma_W\!=\!1\phantom{iii}$&$\sigma_D\!=\!0.5$&1.23  ( 1 )&\bf 1.14  ( 0.77 )&\bf 0.5  ( 0.37 )&1.36  ( 0.58 )&\bf 11  ( 1.47 )&25.82  ( 7.65 )\\
    \cline{2-8}
   &$\sigma_D\!=\!1$&\bf 1.91  ( 1.41 )&3.03  ( 1.07 )&\bf 1.79  ( 0.38 )&2.86  ( 0.59 )&\bf 75.99  ( 6.48 )&102.25  ( 11.82 )\\
    \hline
    \end{tabular}}\vspace{0.3cm}\\
    \tiny{
\begin{tabular}{cc|cc|cc|cc}
\bf{ $\mathbf{h_i}$ setting 2}&&\multicolumn{6}{l}{\bf{ Additive noise: $\sigma_E=0$}}\\    \hline
  Warping & Nuisance  &\multicolumn{2}{l|}{LISE $\times 10^3$}&\multicolumn{2}{l|}{HMISE$\times 10^3$}&\multicolumn{2}{l}{XMISE}\\
   distortion & distortion & $\hat{\lambda}$& $\hat{\lambda}_{CM}$& $\hat{h}_i$&$\hat{h}_{i,CM}$& $\hat{X}_{ij}$&$\hat{X}_{ij,CM}$\\
    \hline
    &$\sigma_D\!=\!0$&0.85  ( 0.65 )&\bf 0.82  ( 0.47 )&\bf 0.41  ( 0.28 )&0.81  ( 0.4 )&\bf 2.89  ( 2.21 )&10.93  ( 5.69 )\\
     \cline{2-8}
    $\varepsilon_W\!=\!0.005$&$\sigma_D\!=\!0.5$&\bf 0.84  ( 0.56 )&1.01  ( 0.62 )&\bf 0.49  ( 0.28 )&1.02  ( 0.45 )&\bf 10.28  ( 1.93 )&17.59  ( 4.45 )\\
     \cline{2-8}
    &$\sigma_D\!=\!1$&\bf 1.38  ( 0.87 )&2.52  ( 1.03 )&\bf 1.96  ( 0.46 )&2.43  ( 0.55 )&\bf 72.13  ( 5.71 )&88.62  ( 9.25 )\\
     \hline
    &$\sigma_D\!=\!0$&\bf 1.93  ( 1.04 )&3.28  ( 1.41 )&\bf 1.14  ( 0.58 )&4.13  ( 1.5 )&\bf 9.87  ( 5.36 )&53.25  ( 19.87 )\\
     \cline{2-8}
   $\varepsilon_W\!=\!0.0075$&$\sigma_D\!=\!0.5$&\bf 1.91  ( 1.01 )&3.39  ( 1.38 )&\bf 1.19  ( 0.57 )&4.35  ( 1.36 )&\bf 16.27  ( 3.68 )&60.3  ( 18.13 )\\
    \cline{2-8}
   &$\sigma_D\!=\!1$&\bf 2.33  ( 1.44 )&5.87  ( 1.79 )&\bf 2.61  ( 0.6 )&5.83  ( 1.4 )&\bf 73.26  ( 7.98 )&118.46  ( 17.85 )\\
    \hline
    \end{tabular}}\vspace{0.3cm}\\
    \tiny{
 \begin{tabular}{cc|cc|cc|cc}
   \bf{ $\mathbf{h_i}$ setting 2}&&\multicolumn{6}{l}{\bf{ Additive noise: $\sigma_E=1$}}\\    \hline
  Warping & Nuisance  &\multicolumn{2}{l|}{LISE $\times 10^3$}&\multicolumn{2}{l|}{HMISE$\times 10^3$}&\multicolumn{2}{l}{XMISE}\\
   distortion & distortion & $\hat{\lambda}$& $\hat{\lambda}_{CM}$& $\hat{h}_i$&$\hat{h}_{i,CM}$& $\hat{X}_{ij}$&$\hat{X}_{ij,CM}$\\
    \hline   
    &$\sigma_D\!=\!0$&1.16  ( 0.89 )&\bf 0.76  ( 0.37 )&\bf 0.45  ( 0.27 )&0.8  ( 0.33 )&\bf 7.55  ( 2.77 )&12.07  ( 4.49 )\\
     \cline{2-8}
    $\varepsilon_W\!=\!0.005$&$\sigma_D\!=\!0.5$&1.06  ( 0.68 )&\bf 0.86  ( 0.45 )&\bf 0.54  ( 0.28 )&0.99  ( 0.44 )&\bf 13.98  ( 2.51 )&19.56  ( 6.21 )\\
     \cline{2-8}
    &$\sigma_D\!=\!1$&\bf 1.53  ( 0.98 )&2.56  ( 1.03 )&\bf 1.92  ( 0.44 )&2.35  ( 0.52 )&\bf 76.94  ( 8.05 )&91.28  ( 11.37 )\\
     \hline
   &$\sigma_D\!=\!0$&\bf 2.17  ( 1.48 )&3.02  ( 1.36 )&\bf 1.18  ( 0.64 )&4.07  ( 1.48 )&\bf 13.76  ( 4.86 )&52.55  ( 18.64 )\\
     \cline{2-8}
   $\varepsilon_W\!=\!0.0075$&$\sigma_D\!=\!0.5$&\bf 2.43  ( 1.55 )&3.37  ( 1.51 )&\bf 1.28  ( 0.65 )&4.37  ( 1.63 )&\bf 21.17  ( 5.06 )&59.21  ( 21.2 )\\
    \cline{2-8}
   &$\sigma_D\!=\!1$&\bf 2.67  ( 1.48 )&5.55  ( 1.81 )&\bf 2.6  ( 0.78 )&5.81  ( 1.38 )&\bf 78.54  ( 8.23 )&119.56  ( 18.83 )\\
    \hline
    \end{tabular}}
    \caption{LISE, HMISE and XMISE means and standard deviations of depth-based estaimators ($\hat{\lambda}$, $\hat{h}_i$, $\hat{X}_{ij}$) and those proposed in \cite{latentDef} ($\hat{\lambda}_{CM}$, $\hat{h}_{i,CM}$, $\hat{X}_{ij,CM}$) over $N=100$ simulation runs for $p=4$ deterministic component distortion functions ($\psi$ setting 1) and $n=50$. The lowest integrated error values in each setting are highlighted in bold.}\label{res1}
\end{table}

\begin{table}
$\mathbf{\psi}$ {\bf setting 1.} $\quad\mathbf{n=100}, \quad\mathbf{p=4}$\vspace{0.3cm}\\
\tiny{
\begin{tabular}{cc|cc|cc|cc}
     \bf{ $\mathbf{h_i}$ setting 1}&&\multicolumn{6}{l}{\bf{ Additive noise: $\sigma_E=0$}}\\
    \hline
  Warping & Nuisance  &\multicolumn{2}{l|}{LISE $\times 10^3$}&\multicolumn{2}{l|}{HMISE$\times 10^3$}&\multicolumn{2}{l}{XMISE}\\
   distortion & distortion & $\hat{\lambda}$& $\hat{\lambda}_{CM}$& $\hat{h}_i$&$\hat{h}_{i,CM}$& $\hat{X}_{ij}$&$\hat{X}_{ij,CM}$\\
    \hline
    &$\sigma_D\!=\!0$&1.87  ( 0.68 )&\bf{ 0.05  ( 0.02 )}&\bf{ 0.01  ( 0.01 )}&0.03  ( 0.03 )&\bf{ 0.11  ( 0.02 )}&0.16  ( 0.02 )\\
    \cline{2-8} 
    $\sigma_W\!=\!0.5$&$\sigma_D\!=\!0.5$&0.54  ( 0.45 )&\bf{ 0.06  ( 0.02 )}&\bf{ 0.12  ( 0.02 )}&0.18  ( 0.03 )&\bf{ 6.6  ( 0.57 )}&6.77  ( 0.57 )\\
    \cline{2-8} 
    &$\sigma_D\!=\!1$&\bf{ 0.16  ( 0.2 )}&0.75  ( 0.27 )&1.67  ( 0.24 )&\bf{ 1.38  ( 0.18 )}&\bf{ 74.8  ( 5.11 )}&78.4  ( 5.56 )\\
     \hline
    &$\sigma_D\!=\!0$&\bf{ 0.19  ( 0.16 )}&0.76  ( 0.41 )&\bf{ 0.2  ( 0.16 )}&1.03  ( 0.41 )&\bf{ 1.63  ( 0.55 )}&16.93  ( 5.82 )\\
     \cline{2-8} 
   $\phantom{iii}\sigma_W\!=\!1\phantom{iii}$&$\sigma_D\!=\!0.5$&\bf{ 0.23  ( 0.29 )}&0.95  ( 0.52 )&\bf{ 0.29  ( 0.16 )}&1.18  ( 0.39 )&\bf{ 7.63  ( 0.75 )}&23.84  ( 5.81 )\\
    \cline{2-8} 
   &$\sigma_D\!=\!1$&\bf{ 0.47  ( 0.48 )}&2.93  ( 0.79 )&\bf{ 1.67  ( 0.24 )}&2.78  ( 0.5 )&\bf{ 71.85  ( 4.9 )}&102.99  ( 9.24 )\\
    \hline
    \end{tabular}}\vspace{0.3cm}\\
\tiny{
\begin{tabular}{cc|cc|cc|cc}
    \bf{ $\mathbf{h_i}$ setting 1}&&\multicolumn{6}{l}{\bf{ Additive noise: $\sigma_E=1$}}\\    \hline
  Warping & Nuisance  &\multicolumn{2}{l|}{LISE $\times 10^3$}&\multicolumn{2}{l|}{HMISE$\times 10^3$}&\multicolumn{2}{l}{XMISE}\\
   distortion & distortion & $\hat{\lambda}$& $\hat{\lambda}_{CM}$& $\hat{h}_i$&$\hat{h}_{i,CM}$& $\hat{X}_{ij}$&$\hat{X}_{ij,CM}$\\
    \hline
    &$\sigma_D\!=\!0$&2.22  ( 0.88 )&\bf{ 0.04  ( 0.01 )}&0.07  ( 0.02 )&\bf{ 0.02  ( 0.02 )}&2.41  ( 0.27 )&\bf{ 1.31  ( 0.03 )}\\
     \cline{2-8}
    $\sigma_W\!=\!0.5$&$\sigma_D\!=\!0.5$&1.31  ( 0.95 )&\bf{ 0.05  ( 0.02 )}&0.18  ( 0.02 )&\bf{ 0.17  ( 0.03 )}&8.87  ( 0.65 )&\bf{ 7.98  ( 0.51 )}\\
     \cline{2-8}
    &$\sigma_D\!=\!1$&0.92  ( 0.78 )&\bf{ 0.71  ( 0.23 )}&1.62  ( 0.22 )&\bf{ 1.34  ( 0.17 )}&\bf{ 76.82  ( 5.24 )}&79.61  ( 5.46 )\\
     \hline
    &$\sigma_D\!=\!0$&0.9  ( 0.86 )&\bf{ 0.77  ( 0.46 )}&\bf{ 0.32  ( 0.25 )}&1  ( 0.38 )&\bf{ 4.54  ( 0.88 )}&17.62  ( 5.73 )\\
     \cline{2-8}
   $\phantom{iii}\sigma_W\!=\!1\phantom{iii}$&$\sigma_D\!=\!0.5$&0.85  ( 0.89 )&\bf{ 0.74  ( 0.43 )}&\bf{ 0.39  ( 0.18 )}&1.07  ( 0.38 )&\bf{ 10.48  ( 1.03 )}&23.77  ( 5.59 )\\
    \cline{2-8}
   &$\sigma_D\!=\!1$&\bf{ 1.37  ( 0.87 )}&2.97  ( 0.86 )&\bf{ 1.72  ( 0.25 )}&2.85  ( 0.43 )&\bf{ 74  ( 4.65 )}&103.88  ( 8.79 )\\
    \hline
    \end{tabular}}\vspace{0.3cm}\\
    \tiny{
\begin{tabular}{cc|cc|cc|cc}
   \bf{ $\mathbf{h_i}$ setting 2}&&\multicolumn{6}{l}{\bf{ Additive noise: $\sigma_E=0$}}\\    \hline
  Warping & Nuisance  &\multicolumn{2}{l|}{LISE $\times 10^3$}&\multicolumn{2}{l|}{HMISE$\times 10^3$}&\multicolumn{2}{l}{XMISE}\\
   distortion & distortion & $\hat{\lambda}$& $\hat{\lambda}_{CM}$& $\hat{h}_i$&$\hat{h}_{i,CM}$& $\hat{X}_{ij}$&$\hat{X}_{ij,CM}$\\
    \hline
 &$\sigma_D\!=\!0$&\bf{ 0.62  ( 0.35 )}&0.63  ( 0.3 )&\bf{ 0.28  ( 0.14 )}&0.68  ( 0.25 )&\bf{ 2.55  ( 1.27 )}&10.31  ( 3.84 )\\
     \cline{2-8}
    $\varepsilon_W\!=\!0.005$&$\sigma_D\!=\!0.5$&\bf{ 0.65  ( 0.36 )}&0.69  ( 0.29 )&\bf{ 0.36  ( 0.13 )}&0.81  ( 0.3 )&\bf{ 9.63  ( 1.44 )}&16.94  ( 4.72 )\\
     \cline{2-8}
    &$\sigma_D\!=\!1$&\bf{ 1.05  ( 0.61 )}&2.39  ( 0.7 )&\bf{ 1.84  ( 0.26 )}&2.31  ( 0.33 )&\bf{ 72.05  ( 4.8 )}&90.94  ( 8.99 )\\
     \hline
    &$\sigma_D\!=\!0$&\bf{ 1.55  ( 0.69 )}&3.02  ( 0.93 )&\bf{ 0.9  ( 0.37 )}&4.04  ( 1.11 )&\bf{ 9.01  ( 3.78 )}&54.27  ( 15.96 )\\
     \cline{2-8}
   $\varepsilon_W\!=\!0.0075$&$\sigma_D\!=\!0.5$&\bf{ 1.57  ( 0.74 )}&2.91  ( 0.83 )&\bf{ 0.95  ( 0.35 )}&3.96  ( 0.94 )&\bf{ 15.47  ( 2.58 )}&57.05  ( 11.81 )\\
    \cline{2-8}
   &$\sigma_D\!=\!1$&\bf{ 1.78  ( 0.84 )}&5.38  ( 1.39 )&\bf{ 2.34  ( 0.4 )}&5.78  ( 1.18 )&\bf{ 72.09  ( 4.68 )}&119.99  ( 13.48 )\\
    \hline
    \end{tabular}}\vspace{0.3cm}\\
    \tiny{
\begin{tabular}{cc|cc|cc|cc}
   \bf{ $\mathbf{h_i}$ setting 2}&&\multicolumn{6}{l}{\bf{ Additive noise: $\sigma_E=1$}}\\    \hline
  Warping & Nuisance  &\multicolumn{2}{l|}{LISE $\times 10^3$}&\multicolumn{2}{l|}{HMISE$\times 10^3$}&\multicolumn{2}{l}{XMISE}\\
   distortion & distortion & $\hat{\lambda}$& $\hat{\lambda}_{CM}$& $\hat{h}_i$&$\hat{h}_{i,CM}$& $\hat{X}_{ij}$&$\hat{X}_{ij,CM}$\\
    \hline   
   &$\sigma_D\!=\!0$&0.85  ( 0.48 )&\bf{ 0.6  ( 0.24 )}&\bf{ 0.35  ( 0.14 )}&0.69  ( 0.25 )&\bf{ 6.67  ( 1.83 )}&11.48  ( 3.59 )\\
     \cline{2-8}
    $\varepsilon_W\!=\!0.005$&$\sigma_D\!=\!0.5$&0.98  ( 0.65 )&\bf{ 0.71  ( 0.35 )}&\bf{ 0.49  ( 0.22 )}&0.84  ( 0.31 )&\bf{ 13.5  ( 1.82 )}&17.89  ( 3.7 )\\
     \cline{2-8}
    &$\sigma_D\!=\!1$&\bf{ 1.24  ( 0.77 )}&2.31  ( 0.71 )&\bf{ 1.79  ( 0.24 )}&2.3  ( 0.31 )&\bf{ 76.05  ( 6.04 )}&92.39  ( 8.74 )\\
     \hline
   &$\sigma_D\!=\!0$&\bf{ 1.87  ( 0.87 )}&3  ( 0.95 )&\bf{ 1.06  ( 0.39 )}&4.11  ( 1.09 )&\bf{ 14.3  ( 3.8 )}&56.48  ( 13.78 )\\
     \cline{2-8}
   $\varepsilon_W\!=\!0.0075$&$\sigma_D\!=\!0.5$&\bf{ 1.84  ( 1.02 )}&3.24  ( 1.13 )&\bf{ 1.12  ( 0.36 )}&4.48  ( 1.17 )&\bf{ 20.11  ( 3.15 )}&65.07  ( 15.11 )\\
    \cline{2-8}
   &$\sigma_D\!=\!1$&\bf{ 2.15  ( 1.07 )}&5.17  ( 1.2 )&\bf{ 2.43  ( 0.44 )}&5.89  ( 1.1 )&\bf{ 76.24  ( 5.98 )}&120.55  ( 12.97 )\\
    \hline
    \end{tabular}}
    \caption{LISE, HMISE and XMISE means and standard deviations of depth-based estaimators ($\hat{\lambda}$, $\hat{h}_i$, $\hat{X}_{ij}$) and those proposed in \cite{latentDef} ($\hat{\lambda}_{CM}$, $\hat{h}_{i,CM}$, $\hat{X}_{ij,CM}$) over $N=100$ simulation runs for $p=4$ deterministic component distortion functions ($\psi$ setting 1) and $n=100$. The lowest integrated error values in each setting are highlighted in bold.}\label{res2}
\end{table}

\begin{table}
$\mathbf{\psi}$ {\bf setting 2.} $\quad\mathbf{n=50}, \quad\mathbf{p=30}$\vspace{0.3cm}\\
\tiny{
\begin{tabular}{cc|cc|cc|cc}
    \bf{ $\mathbf{h_i}$ setting 1}&&\multicolumn{6}{l}{\bf{ Additive noise: $\sigma_E=0$}}\\
    \hline
  Warping & Nuisance  &\multicolumn{2}{l|}{LISE $\times 10^3$}&\multicolumn{2}{l|}{HMISE$\times 10^3$}&\multicolumn{2}{l}{XMISE}\\
   distortion & distortion & $\hat{\lambda}$& $\hat{\lambda}_{CM}$& $\hat{h}_i$&$\hat{h}_{i,CM}$& $\hat{X}_{ij}$&$\hat{X}_{ij,CM}$\\
   \hline   
    &$\sigma_D\!=\!0$&0.81  ( 0.66 )&\bf 0.52  ( 0.63 )&\bf 0.01  ( 0.02 )&0.05  ( 0.04 )&\bf 0.12  ( 0.07 )&0.35  ( 0.06 )\\
     \cline{2-8}
    $\sigma_W\!=\!0.5$&$\sigma_D\!=\!0.5$&0.65  ( 0.53 )&\bf 0.46  ( 0.55 )&\bf 0.03  ( 0.02 )&0.09  ( 0.05 )&\bf 8.61  ( 0.47 )&8.74  ( 0.46 )\\
     \cline{2-8}
    &$\sigma_D\!=\!1$&\bf 0.85  ( 0.79 )&1.19  ( 0.95 )&\bf 0.28  ( 0.05 )&0.39  ( 0.09 )&\bf 88.47  ( 3.26 )&89.11  ( 3.53 )\\
     \hline
    &$\sigma_D\!=\!0$&\bf 1.17  ( 1.02 )&1.58  ( 1.44 )&\bf 0.39  ( 0.44 )&1.21  ( 0.71 )&\bf 1.67  ( 0.8 )&14.78  ( 7.62 )\\
    \cline{2-8}
    $\phantom{iii}\sigma_W\!=\!1\phantom{iii}$&$\sigma_D\!=\!0.5$&\bf 0.94  ( 0.71 )&1.4  ( 1.13 )&\bf 0.31  ( 0.39 )&1.12  ( 0.69 )&\bf 9.78  ( 0.62 )&22.38  ( 6.79 )\\
    \cline{2-8}
   &$\sigma_D\!=\!1$&\bf 1.13  ( 0.99 )&3.35  ( 1.36 )&\bf 0.49  ( 0.22 )&2  ( 0.59 )&\bf 85.81  ( 3.2 )&106.06  ( 8.68 )\\
    \hline
    \end{tabular}}\vspace{0.3cm}\\
    \tiny{
}\begin{tabular}{cc|cc|cc|cc}
    \bf{ $\mathbf{h_i}$ setting 1}&&\multicolumn{6}{l}{\bf{ Additive noise: $\sigma_E=1$}}\\
    \hline
  Warping & Nuisance  &\multicolumn{2}{l|}{LISE $\times 10^3$}&\multicolumn{2}{l|}{HMISE$\times 10^3$}&\multicolumn{2}{l}{XMISE}\\
   distortion & distortion & $\hat{\lambda}$& $\hat{\lambda}_{CM}$& $\hat{h}_i$&$\hat{h}_{i,CM}$& $\hat{X}_{ij}$&$\hat{X}_{ij,CM}$\\
    \hline
    &$\sigma_D\!=\!0$&
    0.97  ( 0.89 )&\bf 0.5  ( 0.51 )&\bf 0.03  ( 0.02 )&0.04  ( 0.03 )&2.05  ( 0.2 )&\bf 1.54  ( 0.07 )\\
     \cline{2-8}
    $\sigma_W=0.5$&$\sigma_D\!=\!0.5$&0.98  ( 1 )&\bf 0.43  ( 0.54 )&\bf 0.04  ( 0.02 )&0.07  ( 0.04 )&10.44  ( 0.47 )&\bf 9.8  ( 0.45 )\\
     \cline{2-8}
    &$\sigma_D\!=\!1$&\bf 0.9  ( 0.79 )&1.07  ( 0.87 )&\bf 0.29  ( 0.05 )&0.36  ( 0.09 )&92.34  ( 3.61 )&\bf 91.45  ( 3.95 )\\
     \hline
    &$\sigma_D\!=\!0$&\bf 1.09  ( 0.97 )&1.12  ( 0.92 )&\bf 0.3  ( 0.24 )&0.96  ( 0.54 )&\bf 5.01  ( 1.27 )&14.85  ( 6.72 )\\
     \cline{2-8}
    $\phantom{iii}\sigma_W\!=\!1\phantom{iii}$&$\sigma_D\!=\!0.5$&\bf 0.96  ( 1.07 )&1.18  ( 0.95 )&\bf 0.33  ( 0.28 )&1.1  ( 0.69 )&\bf 13.11  ( 1.15 )&23.53  ( 8.21 )\\
    \cline{2-8}
   &$\sigma_D\!=\!1$&\bf 1.37  ( 1.06 )&3.4  ( 1.41 )&\bf 0.62  ( 0.3 )&2.11  ( 0.8 )&\bf 89.67  ( 3.41 )&107.43  ( 9.34 )\\
    \hline
    \end{tabular}\vspace{0.3cm}\\
    \tiny{
}\begin{tabular}{cc|cc|cc|cc}
    \bf{ $\mathbf{h_i}$ setting 2}&&\multicolumn{6}{l}{\bf{ Additive noise: $\sigma_E=0$}}\\
    \hline
  Warping & Nuisance  &\multicolumn{2}{l|}{LISE $\times 10^3$}&\multicolumn{2}{l|}{HMISE$\times 10^3$}&\multicolumn{2}{l}{XMISE}\\
   distortion & distortion & $\hat{\lambda}$& $\hat{\lambda}_{CM}$& $\hat{h}_i$&$\hat{h}_{i,CM}$& $\hat{X}_{ij}$&$\hat{X}_{ij,CM}$\\
    \hline
  &$\sigma_D\!=\!0$&\bf 1.26  ( 1.1 )&1.33  ( 1.1 )&\bf 0.43  ( 0.33 )&0.85  ( 0.45 )&\bf 3.42  ( 1.94 )&10.61  ( 4.69 )\\
     \cline{2-8}
    $\varepsilon_W\!=\!0.005$&$\sigma_D\!=\!0.5$&\bf 1.14  ( 0.93 )&1.21  ( 0.87 )&\bf 0.32  ( 0.19 )&0.85  ( 0.42 )&\bf 12.87  ( 1.56 )&17.69  ( 4.23 )\\
     \cline{2-8}
    &$\sigma_D\!=\!1$&\bf 1.42  ( 1.02 )&2.81  ( 1.12 )&\bf 0.5  ( 0.19 )&1.49  ( 0.45 )&\bf 86.94  ( 3.48 )&96.89  ( 7.56 )\\
     \hline
    &$\sigma_D\!=\!0$&\bf 1.91  ( 1.06 )&3.54  ( 1.37 )&\bf 1.05  ( 0.56 )&3.91  ( 1.45 )&\bf 10.32  ( 4.3 )&51.27  ( 18.67 )\\
     \cline{2-8}
   $\varepsilon_W\!=\!0.0075$&$\sigma_D\!=\!0.5$&\bf 2.54  ( 1.65 )&3.92  ( 1.71 )&\bf 1.05  ( 0.59 )&3.97  ( 1.34 )&\bf 19.36  ( 3.43 )&54.94  ( 15.89 )\\
    \cline{2-8}
   &$\sigma_D\!=\!1$&\bf 2.46  ( 1.29 )&5.94  ( 1.67 )&\bf 1.16  ( 0.5 )&5.04  ( 1.18 )&\bf 87.4  ( 4.19 )&121.12  ( 12.9 )\\
    \hline
    \end{tabular}\vspace{0.3cm}\\
    \tiny{
 \begin{tabular}{cc|cc|cc|cc}
    \bf{ $\mathbf{h_i}$ setting 2}&&\multicolumn{6}{l}{\bf{ Additive noise: $\sigma_E=1$}}\\
    \hline
  Warping & Nuisance  &\multicolumn{2}{l|}{LISE $\times 10^3$}&\multicolumn{2}{l|}{HMISE$\times 10^3$}&\multicolumn{2}{l}{XMISE}\\
   distortion & distortion & $\hat{\lambda}$& $\hat{\lambda}_{CM}$& $\hat{h}_i$&$\hat{h}_{i,CM}$& $\hat{X}_{ij}$&$\hat{X}_{ij,CM}$\\
    \hline
    &$\sigma_D\!=\!0$& 1.25  ( 1.13 )&\bf 1.24  ( 0.99 )&\bf 0.37  ( 0.22 )&0.79  ( 0.41 )&\bf 8.51  ( 2.31 )&11.75  ( 4.18 )\\
     \cline{2-8}
$\varepsilon_W\!=\!0.005$&$\sigma_D\!=\!0.5$&1.2  ( 0.84 )&\bf 1.17  ( 0.69 )&\bf 0.36  ( 0.19 )&0.85  ( 0.39 )&\bf 16.85  ( 1.7 )&19.94  ( 4.57 )\\
     \cline{2-8}
    &$\sigma_D\!=\!1$&\bf 1.58  ( 1 )&2.99  ( 1.27 )&\bf 0.56  ( 0.21 )&1.65  ( 0.52 )&\bf 92.49  ( 3.81 )&100.69  ( 7.63 )\\
     \hline
   &$\sigma_D\!=\!0$&\bf 2.73  ( 2.21 )&3.56  ( 1.58 )&\bf 1.16  ( 0.61 )&3.89  ( 1.4 )&\bf 17.22  ( 5.37 )&51.05  ( 19.23 )\\
     \cline{2-8}
   $\varepsilon_W\!=\!0.0075$&$\sigma_D\!=\!0.5$&\bf 2.26  ( 1.28 )&3.8  ( 1.66 )&\bf 1.11  ( 0.61 )&4.14  ( 1.58 )&\bf 24.9  ( 3.84 )&58.48  ( 20.36 )\\
     \cline{2-8}
    \cline{2-8}&$\sigma_D\!=\!1$&\bf 3.16  ( 2.41 )&5.97  ( 1.66 )&\bf 1.3  ( 0.5 )&5.59  ( 1.47 )\bf&92.48  ( 4.5 )&124.86  ( 16.88 )\\
    \hline
    \end{tabular}}
        \caption{LISE, HMISE and XMISE means and standard deviations of depth-based estaimators ($\hat{\lambda}$, $\hat{h}_i$, $\hat{X}_{ij}$) and those proposed in \cite{latentDef} ($\hat{\lambda}_{CM}$, $\hat{h}_{i,CM}$, $\hat{X}_{ij,CM}$) over $N=100$ simulation runs for $p=30$ random component distortion functions ($\psi$ setting 2) and $n=50$. The lowest integrated error values in each setting are highlighted in bold.}\label{res3}
\end{table}

\begin{table}
$\mathbf{\psi}$ {\bf setting 2.} $\quad\mathbf{n=100}, \quad\mathbf{p=30}$\vspace{0.3cm}\\
\tiny{
\begin{tabular}{cc|cc|cc|cc}
    \bf{ $\mathbf{h_i}$ setting 1}&&\multicolumn{6}{l}{\bf{ Additive noise: $\sigma_E=0$}}\\
    \hline
  Warping & Nuisance  &\multicolumn{2}{l|}{LISE $\times 10^3$}&\multicolumn{2}{l|}{HMISE$\times 10^3$}&\multicolumn{2}{l}{XMISE}\\
   distortion & distortion & $\hat{\lambda}$& $\hat{\lambda}_{CM}$& $\hat{h}_i$&$\hat{h}_{i,CM}$& $\hat{X}_{ij}$&$\hat{X}_{ij,CM}$\\
   \hline   
    &$\sigma_D\!=\!0$&0.78  ( 0.71 )&\bf 0.48  ( 0.56 )&\bf 0.01  ( 0.01 )&0.03  ( 0.02 )&\bf 0.12  ( 0.07 )&0.24  ( 0.02 )\\
     \cline{2-8}
    $\sigma_W\!=\!0.5$&$\sigma_D\!=\!0.5$&0.85  ( 0.79 )&\bf 0.52  ( 0.53 )&\bf 0.02  ( 0.01 )&0.06  ( 0.03 )&\bf 8.43  ( 0.35 )&8.54  ( 0.36 )\\
     \cline{2-8}
    &$\sigma_D\!=\!1$&\bf 0.91  ( 0.94 )&1.04  ( 0.83 )&\bf 0.28  ( 0.03 )&0.32  ( 0.05 )&\bf 88.75  ( 3.19 )&89.92  ( 3.54 )\\
     \hline
    &$\sigma_D\!=\!0$&\bf 0.87  ( 0.7 )&1.17  ( 0.97 )&\bf 0.2  ( 0.18 )&0.91  ( 0.42 )&\bf 1.53  ( 0.47 )&13.1  ( 5.35 )\\
    \cline{2-8}
   $\phantom{iii}\sigma_W\!=\!1\phantom{iii}$&$\sigma_D\!=\!0.5$&\bf 0.95  ( 0.89 )&1.22  ( 1.05 )&\bf 0.2  ( 0.13 )&0.98  ( 0.33 )&\bf 9.46  ( 0.51 )&21.71  ( 4.95 )\\
    \cline{2-8}
   &$\sigma_D\!=\!1$&\bf 0.98  ( 0.86 )&3.02  ( 1.26 )&\bf 0.4  ( 0.11 )&2  ( 0.48 )&\bf 84.99  ( 2.91 )&106.79  ( 7.96 )\\
    \hline
    \end{tabular}}\vspace{0.3cm}\\
    \tiny{
\begin{tabular}{cc|cc|cc|cc}
    \bf{ $\mathbf{h_i}$ setting 1}&&\multicolumn{6}{l}{\bf{ Additive noise: $\sigma_E=1$}}\\
    \hline
  Warping & Nuisance  &\multicolumn{2}{l|}{LISE $\times 10^3$}&\multicolumn{2}{l|}{HMISE$\times 10^3$}&\multicolumn{2}{l}{XMISE}\\
   distortion & distortion & $\hat{\lambda}$& $\hat{\lambda}_{CM}$& $\hat{h}_i$&$\hat{h}_{i,CM}$& $\hat{X}_{ij}$&$\hat{X}_{ij,CM}$\\
    \hline
    &$\sigma_D\!=\!0$&
    0.86  ( 0.83 )&\bf 0.47  ( 0.56 )& 0.02  ( 0.01 )&0.02  ( 0.01 )&1.9  ( 0.16 )&\bf 1.41  ( 0.03 )\\
     \cline{2-8}
    $\sigma_W=0.5$&$\sigma_D\!=\!0.5$&0.82  ( 0.79 )&\bf 0.44  ( 0.53 )&\bf 0.04  ( 0.01 )&0.05  ( 0.02 )&10.29  ( 0.46 )&\bf 9.75  ( 0.41 )\\
     \cline{2-8}
    &$\sigma_D\!=\!1$&\bf 0.88  ( 0.75 )&1.01  ( 0.75 )&\bf 0.28  ( 0.03 )&0.31  ( 0.06 )&\bf 91.25  ( 2.82 )&91.57  ( 3.3 )\\
     \hline
    &$\sigma_D\!=\!0$&\bf 0.98  ( 0.79 )&1  ( 0.79 )&\bf 0.24  ( 0.14 )&0.93  ( 0.41 )&\bf 4.81  ( 0.75 )&15.35  ( 5.36 )\\
     \cline{2-8}
   $\phantom{iii}\sigma_W\!=\!1\phantom{iii}$&$\sigma_D\!=\!0.5$&\bf 1.01  ( 0.94 )&1.05  ( 0.78 )&\bf 0.26  ( 0.19 )&0.97  ( 0.44 )&\bf 12.54  ( 0.73 )&23.69  ( 5.99 )\\
    \cline{2-8}
   &$\sigma_D\!=\!1$&\bf 1.32  ( 0.99 )&2.88  ( 1.15 )&\bf 0.49  ( 0.12 )&1.96  ( 0.42 )&\bf 88.01  ( 3.27 )&106.96  ( 6.71 )\\
    \hline
    \end{tabular}}\vspace{0.3cm}\\
    \tiny{
\begin{tabular}{cc|cc|cc|cc}
    \bf{ $\mathbf{h_i}$ setting 2}&&\multicolumn{6}{l}{\bf{ Additive noise: $\sigma_E=0$}}\\
    \hline
  Warping & Nuisance  &\multicolumn{2}{l|}{LISE $\times 10^3$}&\multicolumn{2}{l|}{HMISE$\times 10^3$}&\multicolumn{2}{l}{XMISE}\\
   distortion & distortion & $\hat{\lambda}$& $\hat{\lambda}_{CM}$& $\hat{h}_i$&$\hat{h}_{i,CM}$& $\hat{X}_{ij}$&$\hat{X}_{ij,CM}$\\
    \hline
  &$\sigma_D\!=\!0$&\bf 1.03  ( 0.63 )&1.13  ( 0.73 )&\bf 0.28  ( 0.14 )&0.66  ( 0.27 )&\bf 3  ( 1.39 )&9.06  ( 3.1 )\\
     \cline{2-8}
    $\varepsilon_W\!=\!0.005$&$\sigma_D\!=\!0.5$&\bf 1.03  ( 0.81 )&1.2  ( 1.06 )&\bf 0.23  ( 0.12 )&0.69  ( 0.22 )&\bf 11.8  ( 0.89 )&16.93  ( 3 )\\
     \cline{2-8}
    &$\sigma_D\!=\!1$&\bf1.41  ( 0.99 )&2.65  ( 0.92 )&\bf0.44  ( 0.09 )&1.51  ( 0.31 )&\bf86.42  ( 2.89 )&98.37  ( 5.99 )\\
     \hline
    &$\sigma_D\!=\!0$&\bf2.08  ( 1.12 )&3.33  ( 1.12 )&\bf0.85  ( 0.29 )&3.66  ( 0.92 )&\bf9.99  ( 3.4 )&50.22  ( 13.01 )\\
     \cline{2-8}
   $\varepsilon_W\!=\!0.0075$&$\sigma_D\!=\!0.5$&\bf2.17  ( 1.1 )&3.44  ( 1.21 )&\bf0.83  ( 0.29 )&3.84  ( 0.99 )&\bf18.41  ( 2.55 )&56.11  ( 13.19 )\\
    \cline{2-8}
   &$\sigma_D\!=\!1$&\bf2.41  ( 1.22 )&5.73  ( 1.31 )&\bf0.96  ( 0.21 )&5.41  ( 1 )&\bf85.47  ( 2.79 )&125.64  ( 12.13 )\\
    \hline
    \end{tabular}}\vspace{0.3cm}\\
    \tiny{
 \begin{tabular}{cc|cc|cc|cc}
    \bf{ $\mathbf{h_i}$ setting 2}&&\multicolumn{6}{l}{\bf{ Additive noise: $\sigma_E=1$}}\\
    \hline
  Warping & Nuisance  &\multicolumn{2}{l|}{LISE $\times 10^3$}&\multicolumn{2}{l|}{HMISE$\times 10^3$}&\multicolumn{2}{l}{XMISE}\\
   distortion & distortion & $\hat{\lambda}$& $\hat{\lambda}_{CM}$& $\hat{h}_i$&$\hat{h}_{i,CM}$& $\hat{X}_{ij}$&$\hat{X}_{ij,CM}$\\
    \hline
 &$\sigma_D\!=\!0$& 1.17  ( 0.71 )&\bf 1.01  ( 0.62 )&\bf 0.29  ( 0.11 )&0.71  ( 0.25 )&\bf 7.79  ( 1.67 )&11  ( 3.02 )\\
     \cline{2-8}
$\varepsilon_W\!=\!0.005$&$\sigma_D\!=\!0.5$&1.08  ( 0.71 )&\bf1.05  ( 0.56 )&\bf0.29  ( 0.13 )&0.72  ( 0.24 )&\bf15.79  ( 1.23 )&19.19  ( 3.09 )\\
     \cline{2-8}
    &$\sigma_D\!=\!1$&\bf1.57  ( 0.95 )&2.68  ( 0.95 )&\bf0.5  ( 0.1 )&1.51  ( 0.31 )&\bf90.47  ( 3.15 )&100.22  ( 5.95 )\\
     \hline
   &$\sigma_D\!=\!0$&\bf1.91  ( 0.93 )&3.29  ( 0.98 )&\bf0.91  ( 0.27 )&3.8  ( 0.87 )&\bf16.73  ( 3.68 )&53.55  ( 12.68 )\\
     \cline{2-8}
   $\varepsilon_W\!=\!0.0075$&$\sigma_D\!=\!0.5$&\bf2.06  ( 1.47 )&3.51  ( 1.35 )&\bf0.9  ( 0.24 )&4.08  ( 1.03 )&\bf23.68  ( 2.44 )&62.24  ( 14.76 )\\
     \cline{2-8}
    \cline{2-8}&$\sigma_D\!=\!1$&\bf2.48  ( 1.22 )&5.62  ( 1.44 )&\bf1.1  ( 0.23 )&5.16  ( 1.06 )&\bf90.76  ( 2.88 )&124.83  ( 11.52 )\\
    \hline
    \end{tabular}}
     \caption{LISE, HMISE and XMISE means and standard deviations of depth-based estaimators ($\hat{\lambda}$, $\hat{h}_i$, $\hat{X}_{ij}$) and those proposed in \cite{latentDef} ($\hat{\lambda}_{CM}$, $\hat{h}_{i,CM}$, $\hat{X}_{ij,CM}$) over $N=100$ simulation runs for $p=30$ random component distortion functions ($\psi$ setting 2) and $n=100$. The lowest integrated error values in each setting are highlighted in bold.}\label{res4}
\end{table}

\section{Analysis of real data}\label{appli}

\subsection{Phase variability in sea ice extent in the Arctic Ocean}

There has been long time evidence about the relation between the loss of ice masses in the Arctic and Antarctic oceans and global warming \citep{IceGlobalWarming1, IceGlobalWarming2, IceGlobalWarming3}. In this section we analyse annual curves of daily sea ice extent values on eight different regions of the Arctic ocean, from 1979 to 2024 (Fig. \ref{sie}). We use a public data set from the Norwegian Meteorological Institute \citep{DataSetSIE} in which sea ice extent is provided globally, for the two hemispheres and for specific regions. For this analysis we consider the arctic regions presented in Figure \ref{ocean}: Beaufort Sea, Chukchi Sea, East Siberian Sea, Laptev Sea, Kara Sea, Barents Sea, Fram Strait/Greenland Sea and Svalbard region. The sea ice extent of an ocean region is defined as the area of that region which is covered by a significant amount of sea ice. More precisely, for some given grid, sea ice extent is obtained as the total area of all grid cells with more than 15\% of sea ice concentration \citep{OSI_SAF_manual}. For the data set at hand, this is calculated in a 10km grid. 

\begin{figure}
\begin{center}
\hspace*{-1.5cm}\includegraphics[width=18cm]{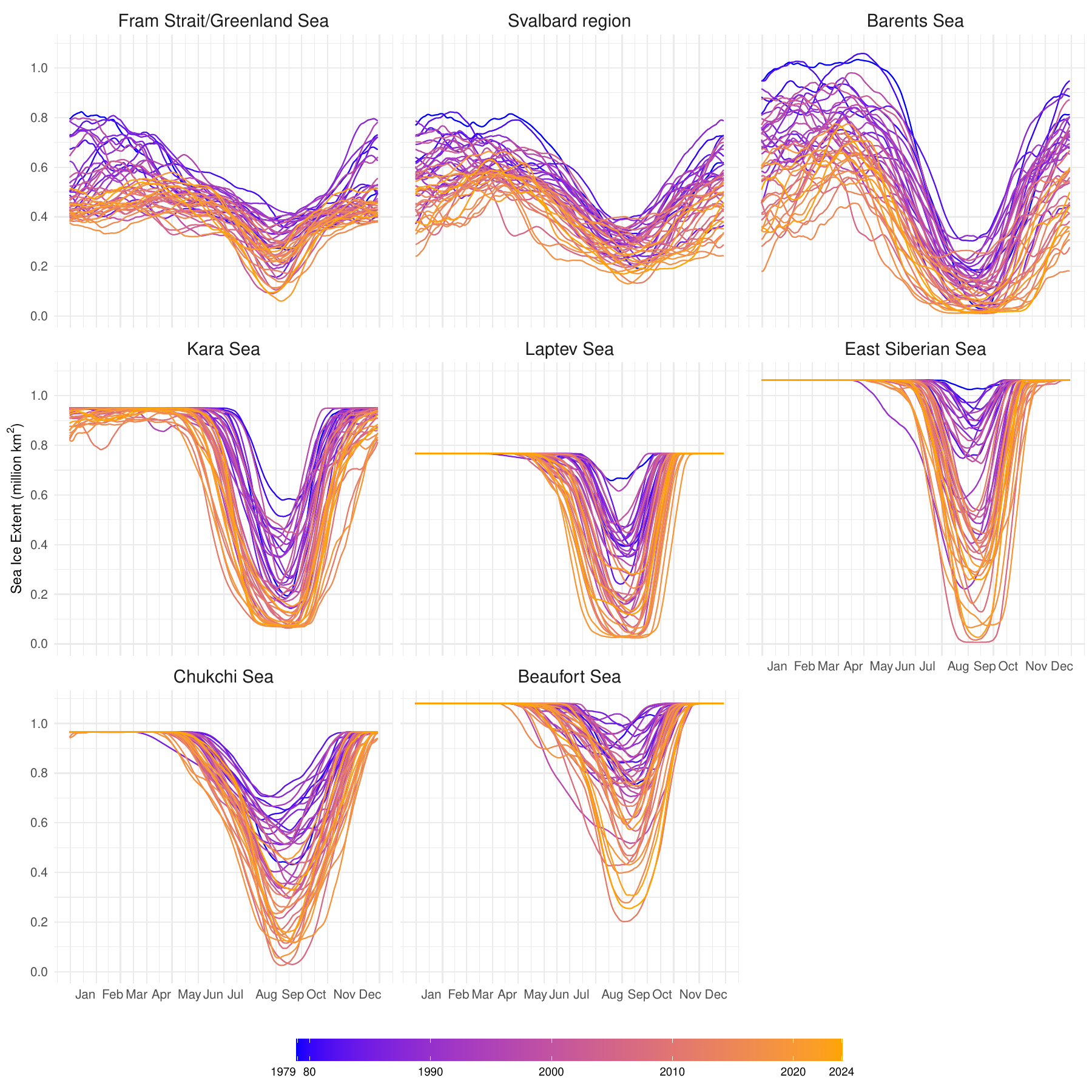}
\end{center}
\caption{Daily sea ice extent values in the eight arctic regions defined in Figure \ref{ocean}, from 1979 to 2024.}\label{sie}
\end{figure}

\begin{figure}
\begin{center}
\hspace*{-1.5cm}\includegraphics[width=10cm]{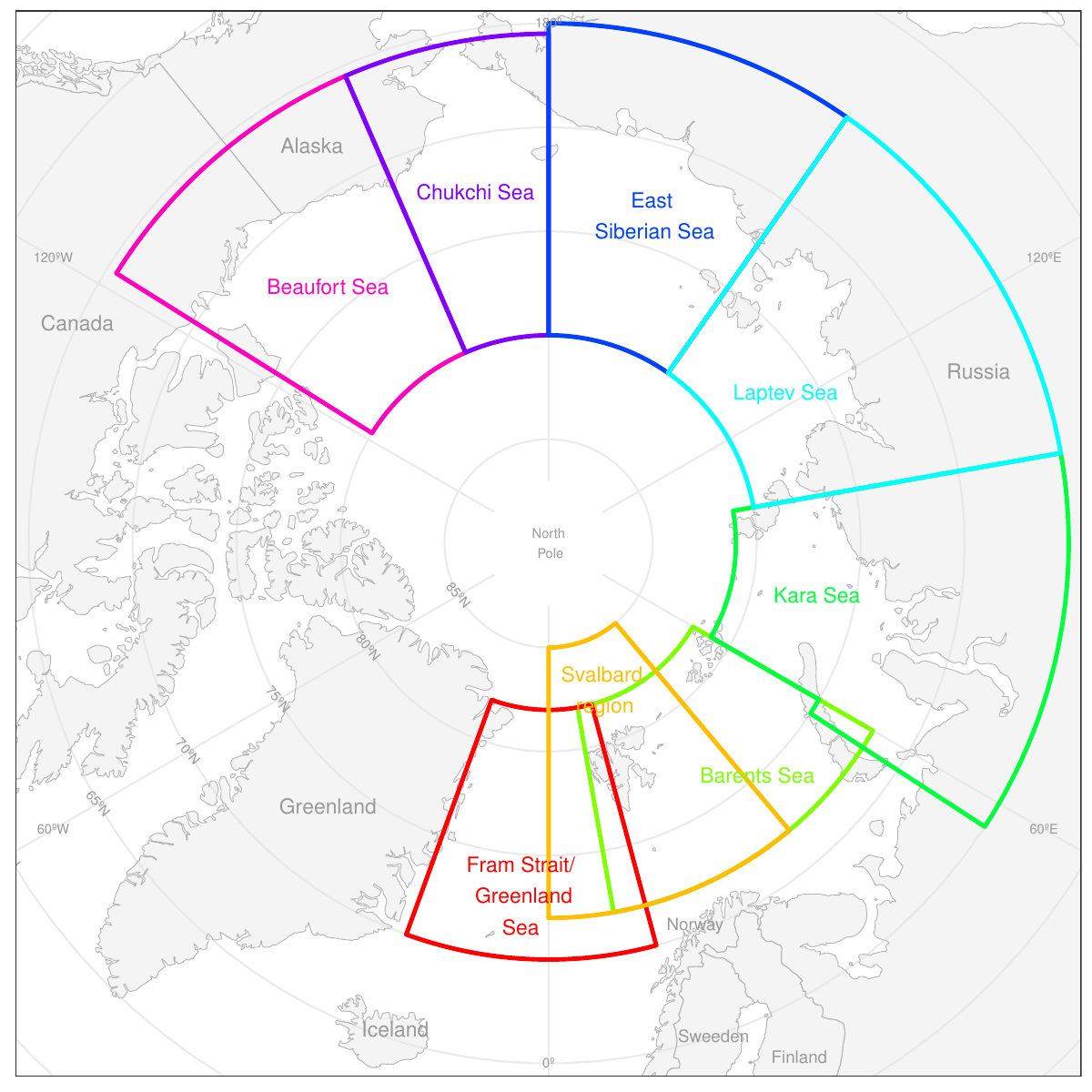}
\end{center}
\caption{Bounding boxes for the the calculation of regional sea-ice indexes in the Arctic Ocean, as defined by \cite{OSI_SAF_manual}.}\label{ocean}
\end{figure}

Daily sea ice extent values were available from October 25th 1978 to the present date (2025), but we only kept whole years, that is, data from 1979 to 2024. We had to handle missing data. In particular, the same set of 156 daily records were missing from all regions on years 1984, 1986 to 1988 and 2024. They were imputed by means of linear interpolation. Also, a moving average smoothing with a window of 15 days was performed to reduce noise, for which data from the end of 1978 and the beginning of 2025 were used. The resulting annual curves are presented in Figure \ref{sie}. It is clear that ice extent has globally decreased over time in the Arctic ocean, with changes visible both in amplitude, the area covered by ice, which is lower each year, and phase, the periods of the year in which this area is at is maximum value, which are shorter each year. Therefore, there is important phase variation over time within each region, and also between the annual patterns of each region. However, we can observe two different ice extent behaviours characterizing two groups of seas: those from approximately 60ºE to 120ºW longitude (Kara, Laptev, East Siberian, Chukchi and Beaufort) exhibiting a constant pattern over winter and spring followed by a quick decay in summer and the subsequent ascent during autumn; and those in longitudes 30ºW to 60ºE (Greenland, Barents, Svalbard), which show a more variable annual pattern with a maximum around spring and a minimum around late summer. Because of this difference in behaviours, we cannot assume that there is a common underlying ice extent amplitude function for all the regions, and therefore we will analyse the two groups separately. 

We apply our depth-based estimation procedure for the latent deformation model to both sets of five and three ocean regions, where each individual region is a component of the process and each annual curve represents an individual observation. Warping estimates obtained for each set of multivariate curves are shown in Figure \ref{sie_warps}. In the group of five seas, due to the constant nature of ice extent during winter months warping functions are estimated with a steep, nearly vertical, segment at the begining and end of the year. We can also appreciate a distinct behaviour of the warping functions corresponding to more recent years, in comparison to the 80's and 90's of the previous century.  From around 2010, the annual ice extent process is accelerated in spring and delayed in autumn, corresponding to longer periods of lower ice concentration. In the group of the Greenland sea, Barents sea and Svalbard region, warping estimates are more irregular, as the observed curves themselves, and the evolution of phase over time is not that clear.\\
In order to detect possible anomalies in phase, we apply functional outlier detection techniques to the warping estimates. We also look for marginal and joint outliers in the original multivariate curves, to identify those years which are genuinely \emph{phase} outliers and do not present other types of anomalies. In particular, we use the functional boxplot \citep{fbplot} and the outliergram \citep{outgram}, to look for both amplitude and shape outliers, on the warping estimates and marginally in each variable of the observed multivariate sample. For joint outlier detection, we use the Depthgram \citep{Depthgram}, functional outlier map (FOM, \cite{FOM}) and magnitude-shape plot(MS-plot, \cite{MSplot}, on the multivariate ice-extent curves. In the five-seas region from Beaufort to Kara sea, no warping outlier has been detected. An outlier search on the original curves themselves does not reveal any interesting finding. In fact, we find a large number of marginal atypical curves, in particular many magnitude outliers,  which is common in functional samples for which amplitude variation is close to zero in some subinterval of the time domain. However, none of these potential outliers presents a particularly atypical phase as per the estimated warping functions, as mentioned above. When analysing the group of Greenland,
Barents, and Svalbard sees, the outlier detection procedures on the warping estimates yielded three atypical years: 2016, 2018 and 2022 (Figure \ref{3seas_outs}).  The marginal and joint outlier detection search on the original multivariate curves found atypical annual ice-extent profiles in 1979, 1999, 2003, 2004, 2012, 2014 and 2020, and therefore none of the years identified from the set of warping functions. For these three years, 2016, 2018 and 2022, we can see in Figure \ref{3seas_outs} how the corresponding warping estimates show a \emph{summer bump} indicating a longer than usual low-ice-extent regime, from approximately June to the end of the year.

In this way, we show that a detailed analysis of the warping functions estimated in the latent deformation model can help obtain insights about phase variation that might not be directly observable from the original curves themselves.

\begin{figure}
\begin{center}
\hspace*{-0.75cm}
\includegraphics[width=8cm]{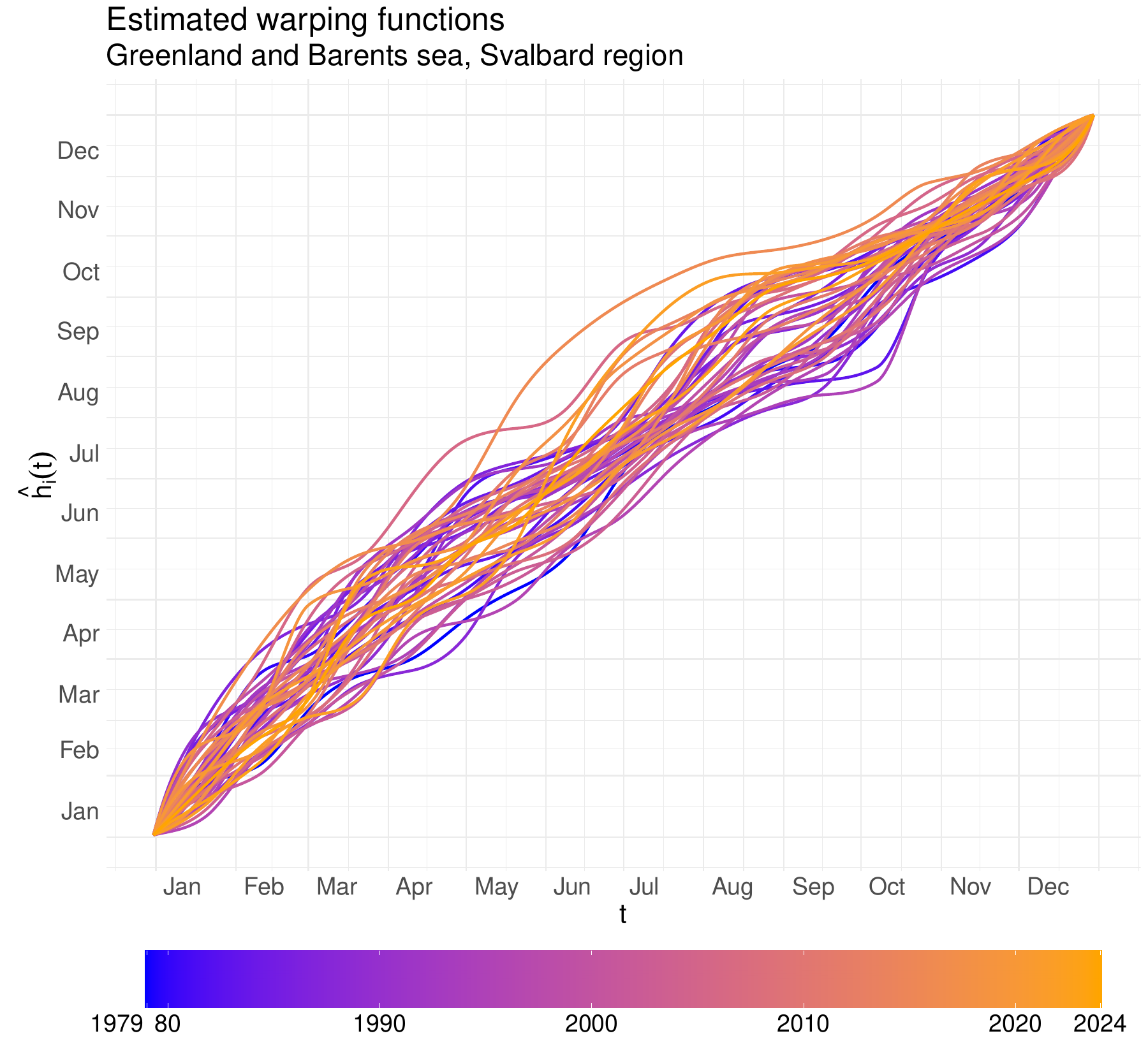}\hspace*{0.1cm}
\includegraphics[width=8cm]{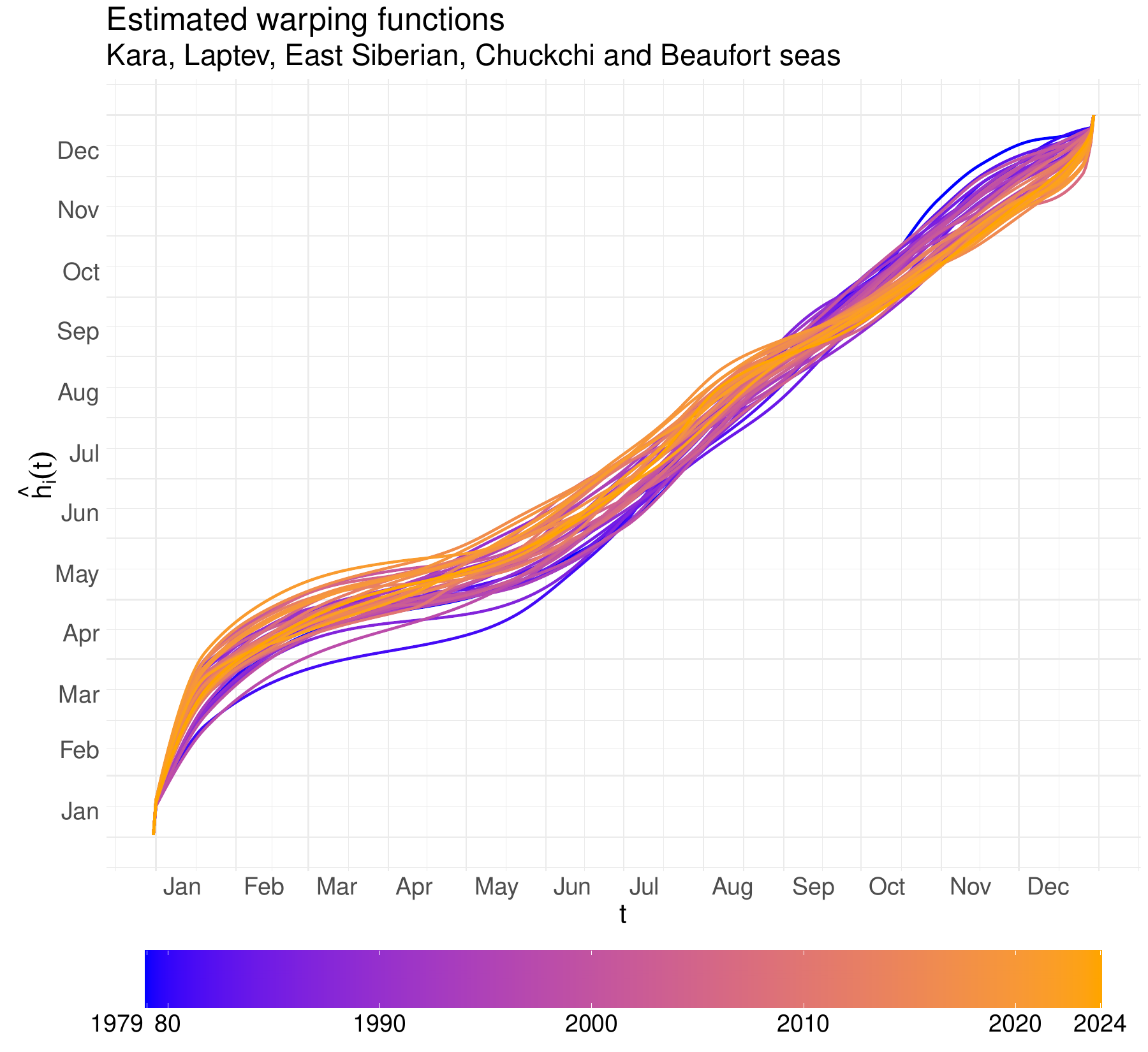}
\end{center}
\caption{Yearly warping functions estimated separately on the two sets of Artic regions.}\label{sie_warps}
\end{figure}

\begin{figure}
\begin{center}
\hspace*{-0.75cm}
\includegraphics[width=16cm,trim = 3.6cm 0.3cm 3.6cm 0.3cm, clip]{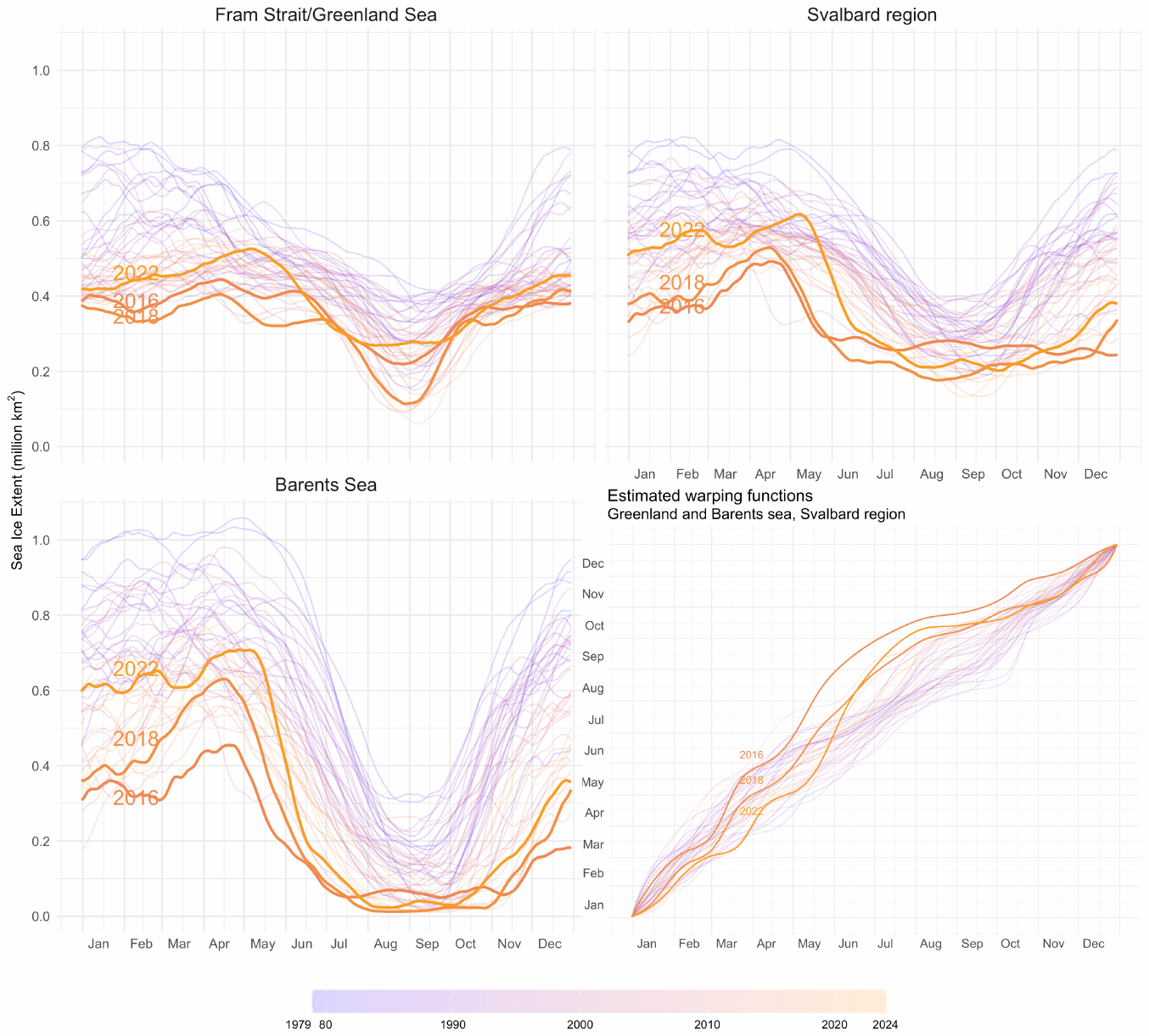}
\end{center}
\caption{Daily ice extent (1980 - 2024) in Greenland Sea, Barents Sea and Svalbard region and corresponding warping functions. Years 2016, 2018 and 2022 are detected as having an atypical warping profile.}\label{3seas_outs}
\end{figure}

\subsection{Mother age distribution across Europe}\label{data2}

In this section we analyse the evolution of maternity across Europe from the beginning of the 21st century. In particular, we focus on the distribution of mothers age at the birth of their first child for a set of 27 European countries, in the years 2000, 2010 and 2020. The countries considered are those among a larger group of 47 countries included in the Eurostat dataset \citep{DataSetMaternity} for which the target variable was available at all three years. These are: Austria, Belgium, Bulgaria, Switzerland, Cyprus, Czechia, Estonia, Greece, Spain, Finland, Croatia, Hungary, Ireland, Iceland, Lithuania, Luxembourg, Latvia, North Macedonia, Netherlands, Norway, Poland, Portugal, Romania, Serbia, Sweden, Slovenia, Slovakia. For each country and year, counts of women giving birth to their first child were reported, grouped by women age in 1 year intervals, from 15 to 49 years old. Those were used to obtain density curves by smoothing the raw data by means of cubic B-splines. The resulting multivariate data set is presented in Figure \ref{Mat_data} (top). It can be seen that all three years share a unimodal age distribution (with one atypical bimodal curve in 2000 corresponding to Ireland). However, in 2000 the distribution is right skewed, and in the following years it becomes roughly symmetrical. The mode varies from around 23 years old in 2000 to around 30 in 2020. Therefore, we fall under the framework of the latent deformation model, in which the different components of the process are the age distributions in different years which can be understood as warped versions of some latent maternity age distribution. Also, within each year, each country distribution presents individual time distortion.\\

It is interesting to understand how maternity age distribution has evolved over time, globally in Europe and in different countries, other than by simply quantifying how much the mean, median or modal maternity age has increased during the last 20 years.
In Figure \ref{Mat_data} (center) we see all three estimated amplitude functions using our proposed depth-based method,  corresponding to Cyprus in 2000, Croatia in 2010 and Estonia in 2020, as well as the global amplitude function, which in this case is again the age distribution in Croatia 2010. We can appreciate how the mode and the shape of the first-time-mothers age distribution changes over time. From the warping functions estimated globally over all three components (Fig \ref{Mat_data} center right), we see that country time distortions seem to have a quite regular and symmetrical structure around the diagonal of the age interval. However, if we look at the warping functions estimated on each component (Fig \ref{Mat_data} bottom), it is clear that the time distortion variability of the maternity age distribution has decreased from 2000 to 2020, when there is a higher homogeneity between the set of analysed countries. Warping estimates can also help highlight some features in the data: for instance, the 2010 warping with noticeable lowest values between 20 and 30 years old corresponds to Spain and its significantly late maternity age (see Fig \ref{Mat_count} bottom line).
Another relevant question is whether each country's evolution towards a later entry into maternity is faster or slower than that of neighbouring countries. In other words, a country with younger/older mothers in 2000 is also a country with younger/older mothers in 2020, in comparison with the rest? A way of analysing this is by looking at the correspondence between the warping estimates obtained for each one of the three years. If early (resp. late) maternity is preserved across time for some country, its corresponding warping estimates should have relatively high/over the diagonal (resp. low/under the diagonal) values for all 2000, 2010 and 2020. This can be visualized in the last row of Figure \ref{Mat_data}, where the color scale used for each country corresponds to the hypograph values of the warping estimates in the 2000 sample. By further representing the correspondence between the hypograph values across the three samples of warping estimates with the WHyRA plot (Fig \ref{Mat_count} top) we can see which countries have consistently low, medium or high hypograph values, and which ones show visible changes over the three years. In general, we can say that in all three scatter plots points are cluttered close to the diagonal line, indicating a high consistency over time. However, we can observe changes for some countries, specially noticeable in the scatter plot comparing the warpings' hypograph values of 2000 and 2020. For instance, the point corresponding to Romania lies below the diagonal and has a high hypograph value in 2000 and a medium value for 2020.  This indicates an early maternity age in 2000 which has become an "average" maternity age distribution in 2020 (Fig \ref{Mat_count} second row). The opposite occurs with Sweden (Fig \ref{Mat_count} third row) which used to have one of the latest maternity age distribution which is now quite standard. As an example of a country whose relative maternity timing with respect to the rest of countries hasn't changed over time there is Spain (Fig \ref{Mat_count} bottom), whose corresponding point in the hypograph's scatter plots lies close to the diagonal.

\begin{figure}
\hspace*{-1cm}\includegraphics[width=15cm,trim={0 0.1cm 0 0.7cm},clip]{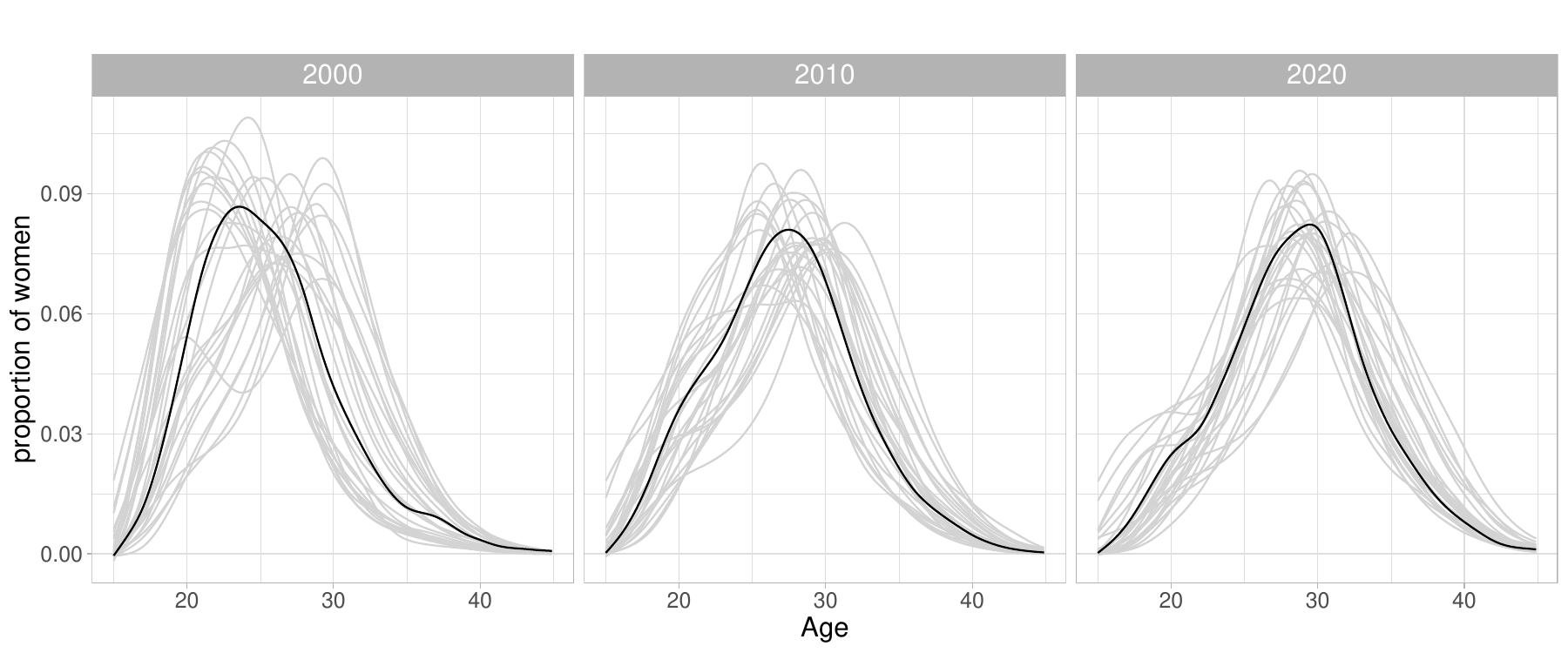}\\
\hspace*{-1cm}\includegraphics[width=7.75cm,trim={0 0.1cm 0 0.7cm},clip]{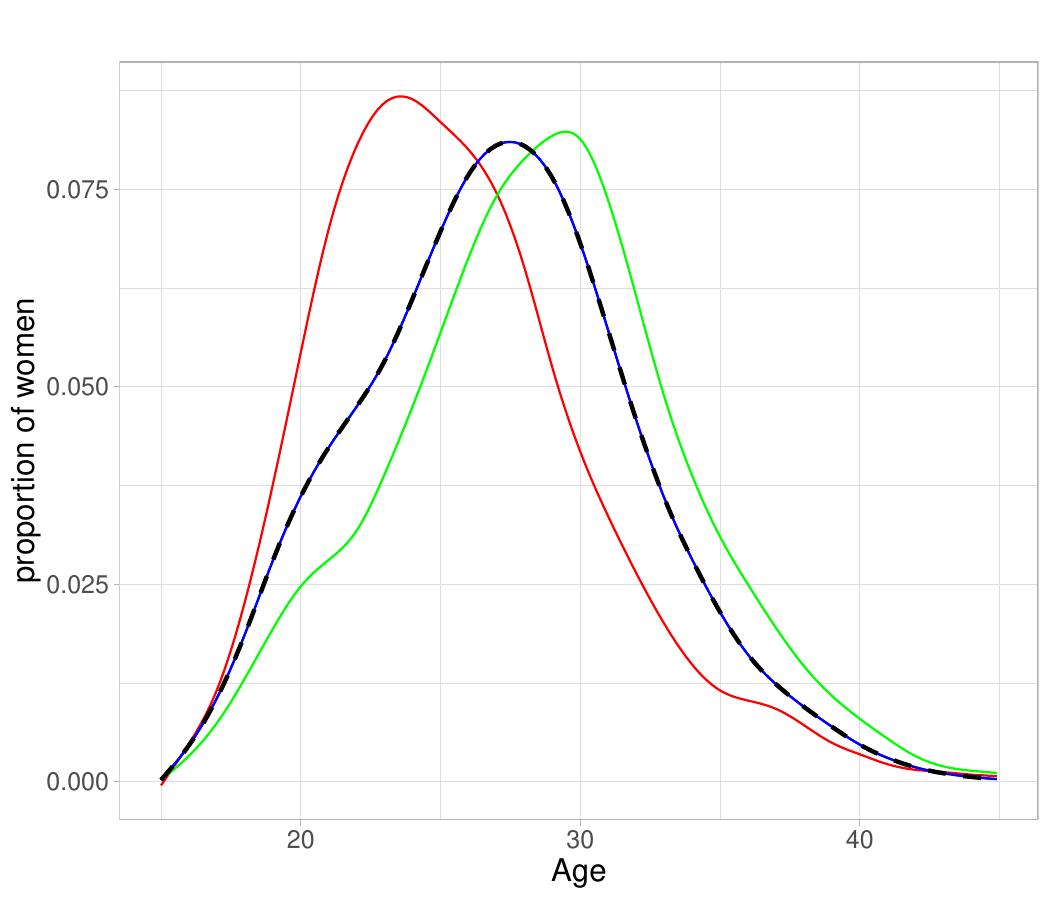}\hspace*{0.8cm}\includegraphics[width=6.6cm,trim={0 0.1cm 0 0.7cm},clip]{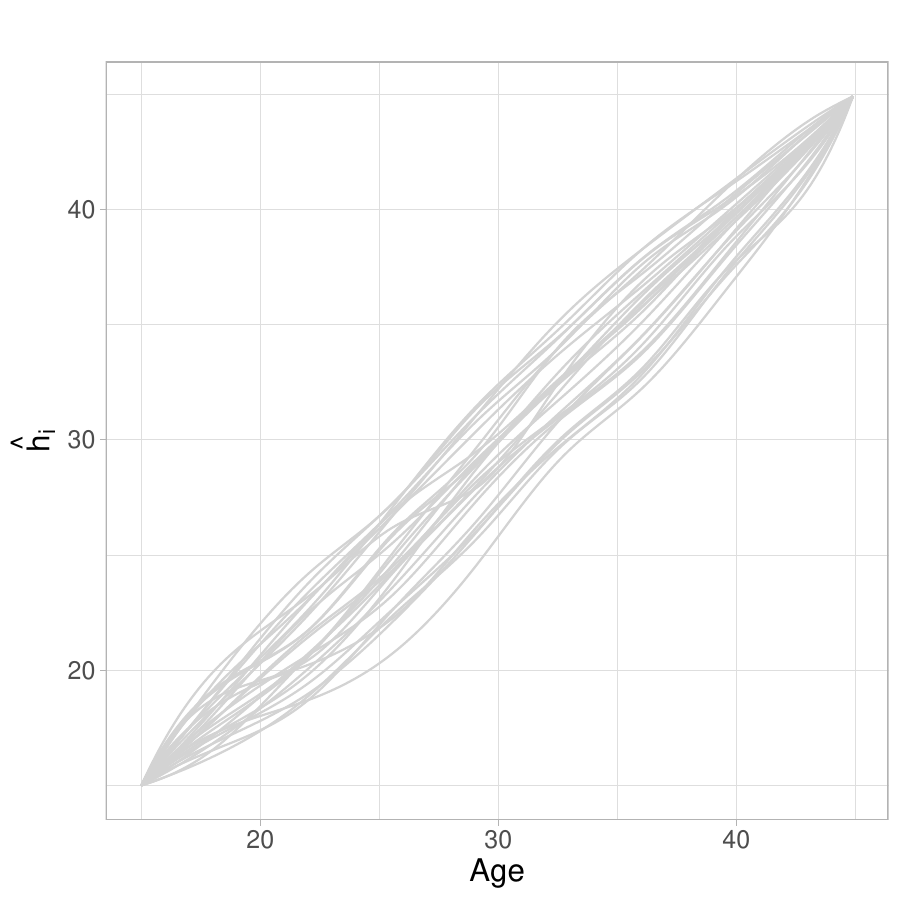}\\
\hspace*{-1cm}\includegraphics[width=14.2cm,trim={0 0.1cm 0 0.7cm},clip]{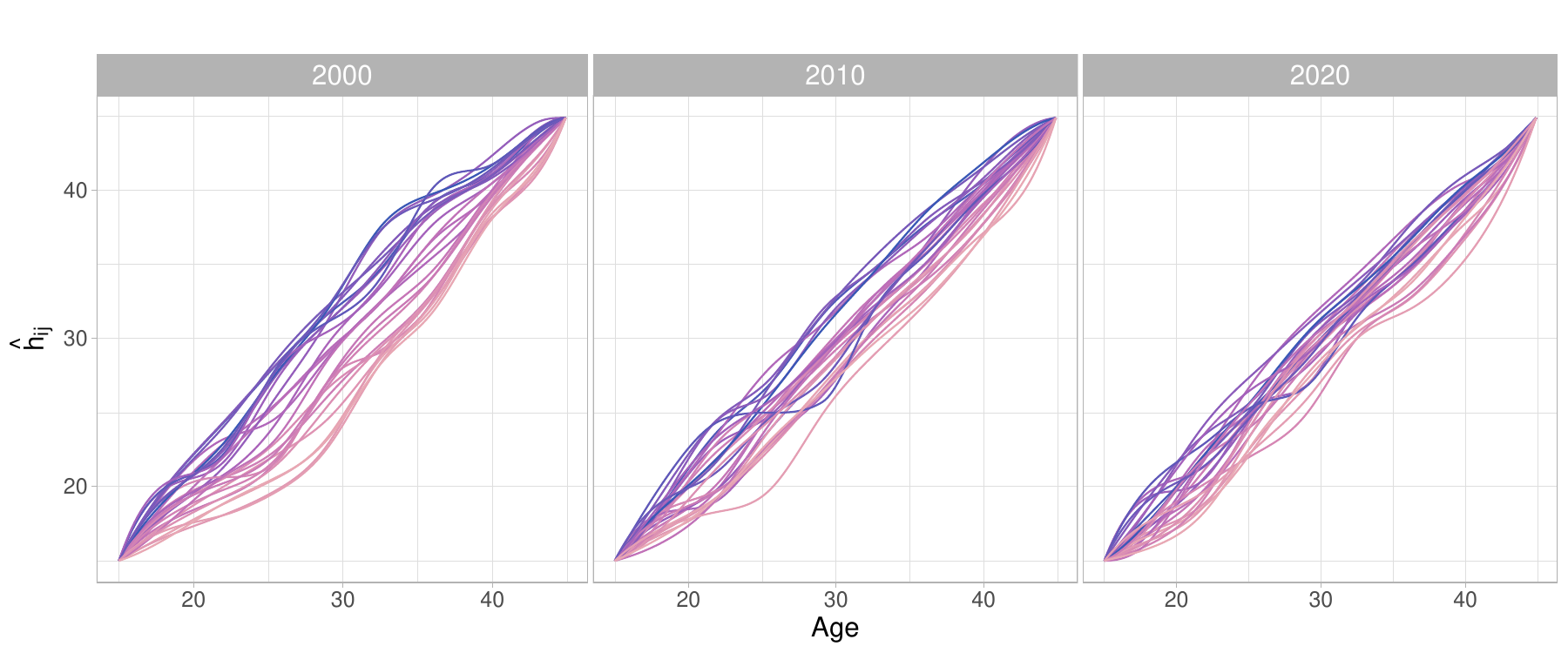}\includegraphics[width=2.6cm,trim={7cm -1cm 0cm 2cm},clip]{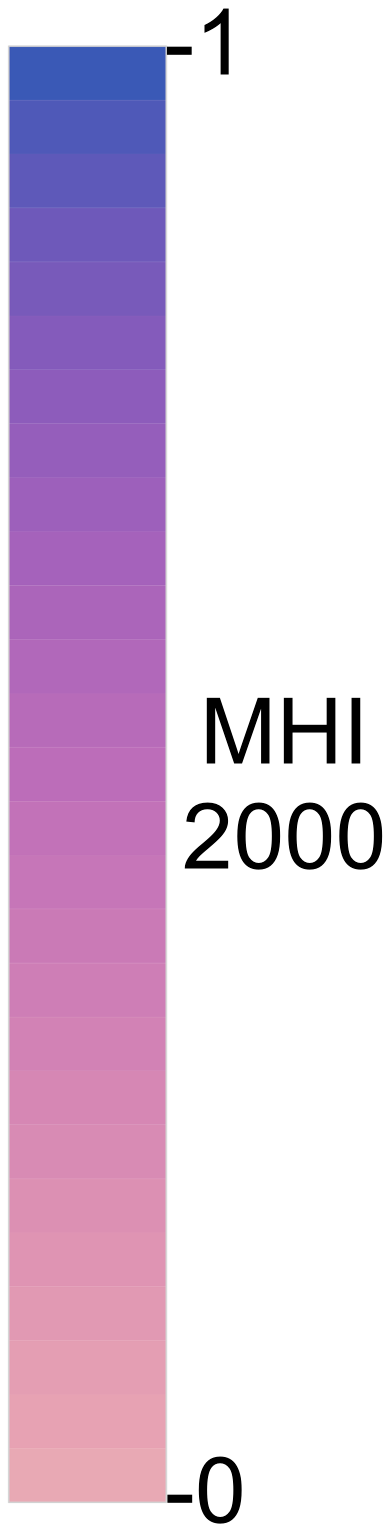}
\caption{Top: Mother age distribution at first child birth for the group of 27 European countries in 2000, 2010 and 2020. The median curve age distribution for each is represented as a black solid curve, and corresponds to Cyprus, Croatia and Estonia, respectively. Center left: Estimated components' amplitude functions ($\hat{\gamma_j}$, coloured solid lines),  and estimated common amplitude function ($\hat{\lambda}$, dashed black line).  Center right: Estimated individual warping functions. Bottom: Individual warping functions estimated on each component. Color code is defined as the modified hypograph value for warping function estimate in the year 2000 sample.}\label{Mat_data}
\end{figure}

\begin{figure} 
\begin{center}
\hspace*{-0.8cm}\includegraphics[width=15cm]{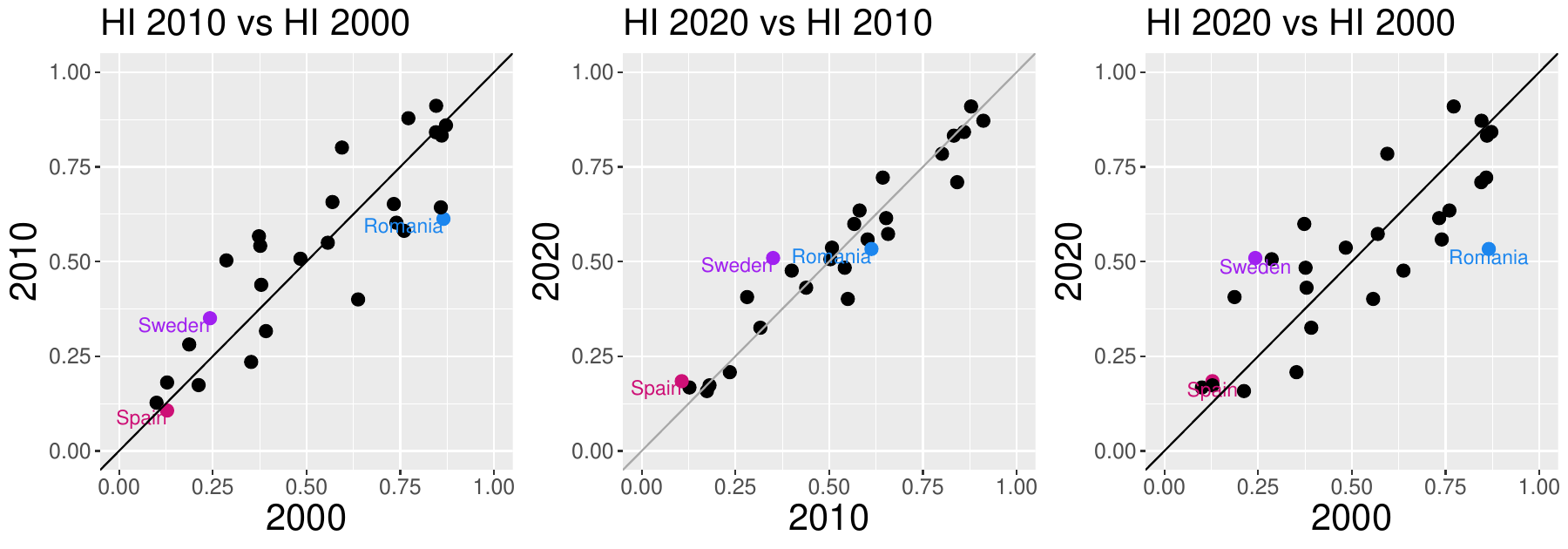}\\
\hspace*{-0.6cm}\includegraphics[width=14.6cm, trim=0 0 0 1cm,clip]{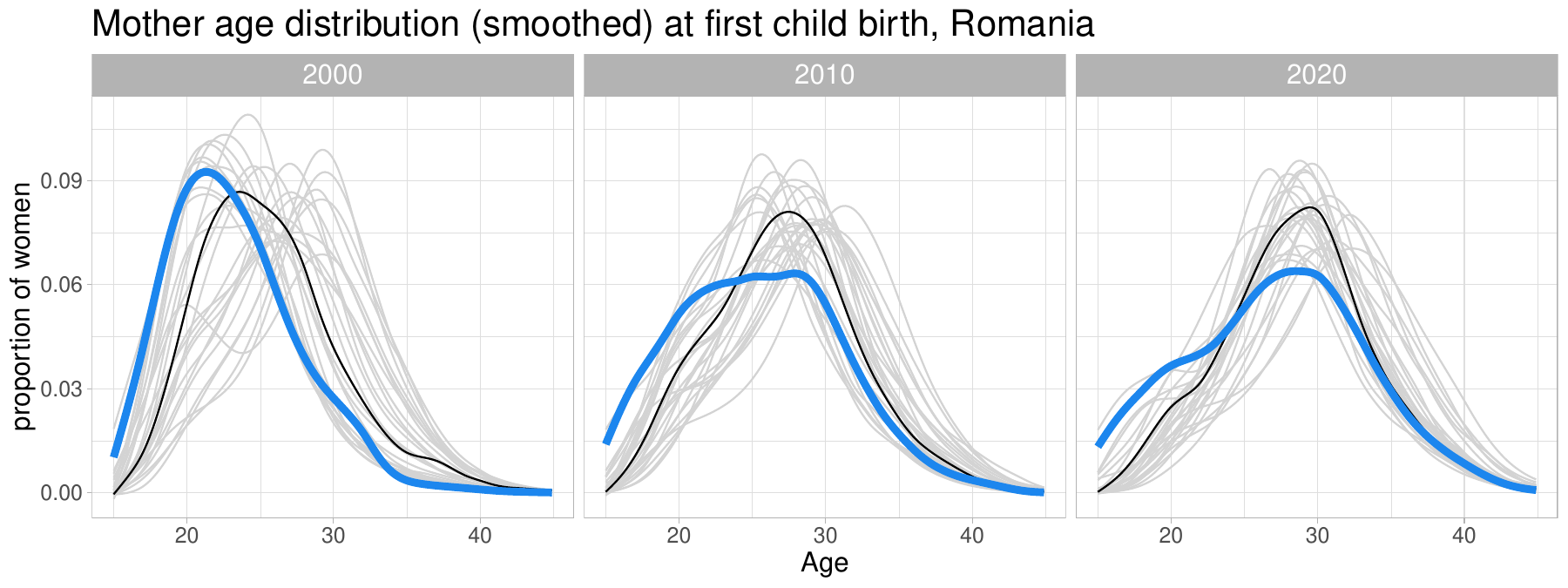}$\,\,$\rotatebox[x=3em,y=6em]{-90}{\texttt{Romania}}\\
\hspace*{-0.6cm}\includegraphics[width=14.6cm, trim=0 0 0 1cm,clip]{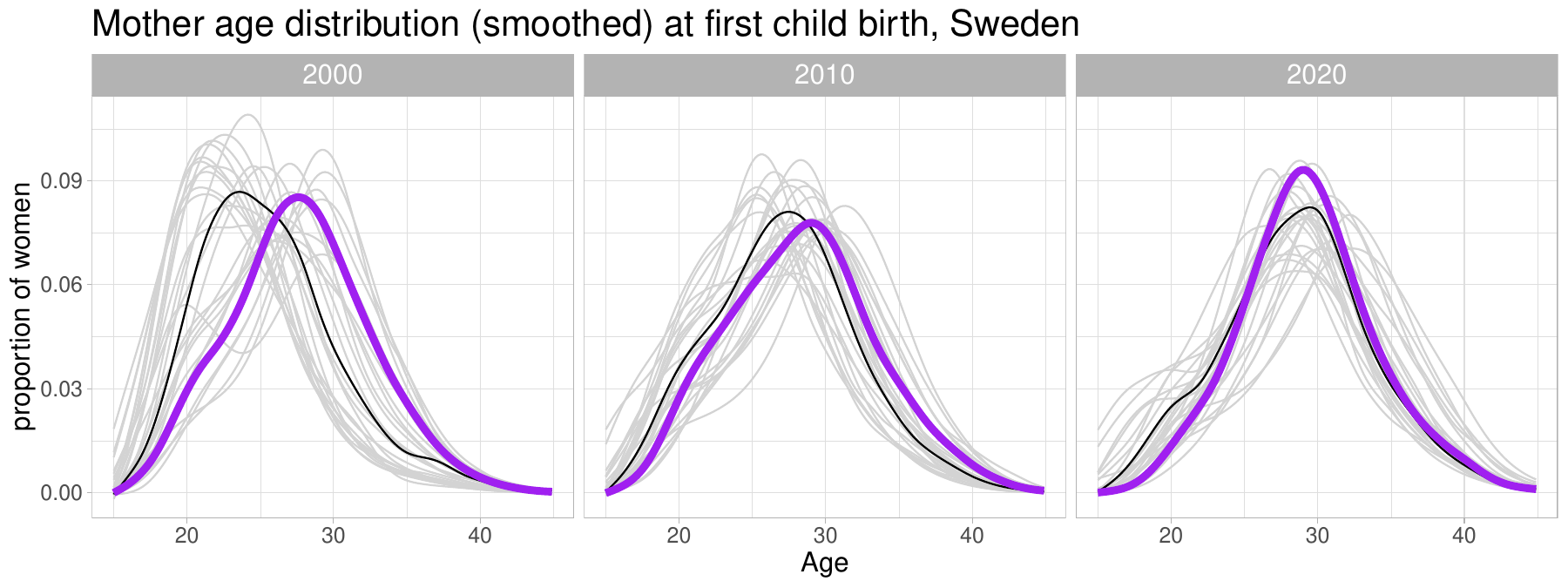}$\,\,$\rotatebox[x=3em,y=6em]{-90}{\texttt{Sweden}}\\
\hspace*{-0.6cm}\includegraphics[width=14.6cm, trim=0 0 0 1cm,clip]{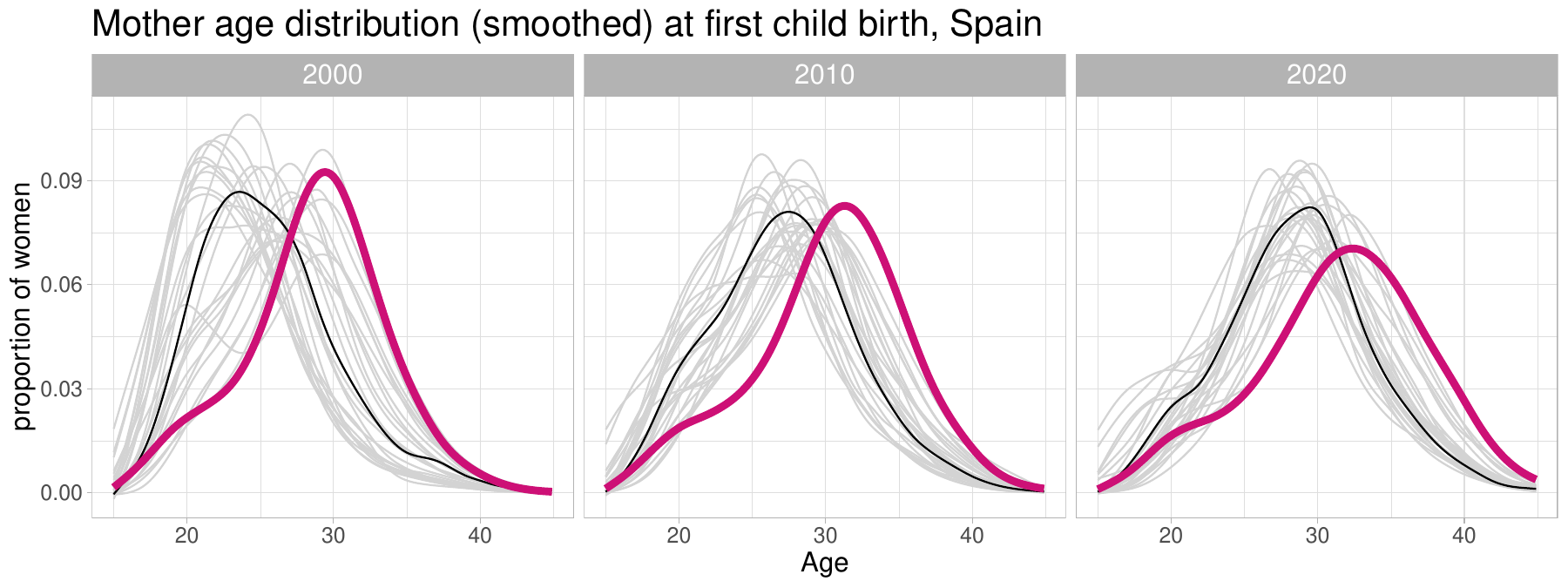}$\,\,$\rotatebox[x=3em,y=6em]{-90}{\texttt{Spain}}
\end{center}
\caption{Top: WHyRA plot of the estimated warping functions in 2000, 2010 and 2020. Bottom: Mother age distribution at first child birth for Romania (blue), Sweden (purple) and Spain (red). The median curve age distribution is represented as a black solid curve for each year.}\label{Mat_count}
\end{figure}

\section{Discussion}\label{disc}
In this article, we propose an alternative estimation method for the latent deformation model \citep{latentDef}, a specific multivariate functional time-warping model in which individual phase variability and cross-component phase variability are separable. Our estimation strategy differs from the original approach in that it is based on functional depth measures and does not require curve registration to estimate the common amplitude function, making it more computationally efficient.

Specifically, we establish conditions under which a general functional depth measure yields a consistent estimator of the target function by directly selecting the deepest curve in the sample (or in the monotonized sample when the target function is non-monotonic). We also discuss the conditions required for the cross-component and individual warping processes so that their composition satisfies the location assumption necessary for consistency of our depth-based estimator.

Although these conditions are strong, the simulation study—where the generating warping processes did not satisfy them—shows that our approach is robust to their violation, as well as to departures from other model assumptions. The results further demonstrate that our method provides robust estimates of all model components in the presence of atypical observations. The proposed methodology is not intended for highly noisy or irregular data, for which a pre-smoothing step would be required.

Finally, we introduce the WHyRA plot, which facilitates model diagnostics and enhances understanding of individual variability in practical applications, as illustrated using real data.

\section*{Appendix}
\subsection*{A1. Proofs of Propositions 1, 2 and 3.}
\emph{Proof of Proposition \ref{prop_univ}}
For a random sample $X_1,\ldots,X_n$ generated under model (\ref{univ_TW}) we have that
$$FD_{X_1,\ldots,X_n}(X_i) = FD_{\lambda\circ h_1,\ldots,\lambda\circ h_n}(\lambda\circ h_i) = FD_{h_1,\ldots,h_n}(h_i),\quad i=1,\ldots,n$$
for a strictly increasing function $\lambda$, because of 1). Then $i^\star := \arg\max_{i} FD_{X_1,\ldots,X_n}(X_i) = \arg\max_{i} FD_{h_1,\ldots,h_n}(h_i)$ and
$$\hat{m}^{FD}_{X_{1:n}} = X_{i^\star} =  \lambda \circ h_{i^\star}= \lambda \circ \hat{m}^{FD}_{h_{1:n}}.$$
Now, because of 2), the continuous mapping theorem, and the identifiability assumption $m_H^{FD}=id_{\I}$ on model (\ref{univ_TW}), we have that under the required conditions on $P_X$
$$\hat{m}^{MBD}_{X_{1:n}} \stackrel{a.s.}{\longrightarrow} \lambda \circ m^{FD}_H = \lambda \qquad \mbox{ as } \quad n\longrightarrow \infty.\vspace{-1.2cm}$$
\hfill$\blacksquare$ 

\noindent\emph{Proof of Proposition \ref{prop_univ_gen}.}
For a random sample $X_1,\ldots,X_n$ generated under model (\ref{univ_TW}) we have that
$$FD_{T(X_1),\ldots,T(X_n)}(T(X_i)) = FD_{T(\lambda)\circ h_1,\ldots,T(\lambda)\circ h_n}(T(\lambda)\circ h_i) = FD_{h_1,\ldots,h_n}(h_i),\quad i=1,\ldots,n$$
where the first equality holds since $T$ preserves warping functions and the second one because $T(\lambda)$ is a strictly monotone function and $FD$ satisfies 1) in Prop. \ref{prop_univ}. Then $i^\star := \arg\max_{i} FD_{T(X_1),\ldots,T(X_n)}(T(X_i)) = \arg\max_{i} FD_{h_1,\ldots,h_n}(h_i)$ and
$$\hat{m}^{FD}_{T(X_{1:n})} = T(X_{i^\star}) = T(\lambda) \circ h_{i^\star}= T(\lambda) \circ \hat{m}^{FD}_{h_{1:n}}$$
so
$$\arg\max_{X_i} FD_{T(X_1),\ldots,T(X_n)}(T(X_i)) = X_{i^\star} = \lambda \circ h_{i^\star} = \lambda \circ \hat{m}^{FD}_{h_{1:n}}.$$
Now, because of 2) in Prop. \ref{prop_univ}, the continuous mapping theorem, and the identifiability assumption $m_H^{FD}(t) =id_{\I}$ on model (\ref{univ_TW}), we have that under the required conditions on $P_X$\\
$$\arg\max_{X_i} FD_{T(X_1),\ldots,T(X_n)}(T(X_i)) \stackrel{a.s.}{\longrightarrow} \lambda \circ m^{FD}_H = \lambda \qquad \mbox{ as } \quad n\longrightarrow \infty.\vspace{-1.2cm}$$
\hfill$\blacksquare$ 

\noindent\emph{Proof of Proposition \ref{composition}.} For $t\in \I$, $H(t)$ is an absolutely continuous random variable on $\I$ with central symmetry with respect to $t$, i.e., $f_{H(t)}(x) = f_{H(t)}(2t-x)$, $\forall x\in \I$, where $f_{H(t)}$ denotes the density function of the marginal distribution $H(t)$. Let $(t-k_t,t+k_t)$ define its support, for some $k_t>0$.\\
Also, for $t\in \I$, $\Psi^{-1}(t)$ is a random variable on $\I$ with central symmetry around $t$, i.e., $P(\Psi^{-1}(t) \leq x) = P(\Psi^{-1}(t) \geq 2t-x)$, $\forall x\in \I$.\\
Then, the marginal distribution function of $G(t)= \Psi(H(t))$ at $t\in\I$ is given by
$$ P(G(t) \leq t) = \int_{t-k_t}^{t+k_t} P(\Psi(u)\leq t) f_{H_t}(u) du = \int_{t-k_t}^{t+k_t} P(\Psi(2t-u)\geq t) f_{H_t}(2t-u) du$$
$$= \int_{t-k_t}^{t+k_t} P(\Psi(v)\geq t) f_{H_t}(v) dv = P(G(t) \geq t)$$
where the second equality holds because of the central simetry of $\Psi^{-1}(t)$ and $H(t)$ around $t$, and the third equality follows a change of variable.\\
Therefore, for all $t\in \I$, $P(G(t) \leq t) =P(G(t) \geq t) \geq 1/2$, that is, $t$ is the median of the marginal distribution of $G(t)$. Now, following Theorem 3 in \cite{QID}, any quantile integrated depth on $G$ will be uniquely maximized at $id_{\I}(t) = t$, $t\in\I$. That is, $m_G^{FD} = id_\I$, for such a functional depth measure $FD$.\hfill
$\blacksquare $

\subsection*{A2. Results for simulations under outlier contamination}
Tables \ref{res_outliers} and  \ref{res_outliers2} present the results of the simulation study under shape outlier contamination, with contamination rates $c=0$, $0.05$ and $0.1$. Figure \ref{figs_simdata_outs} illustrates four particular simulation runs under different settings.
\begin{figure}[h!]
\begin{center}
\hspace*{-1cm}\includegraphics[width=15.5cm]{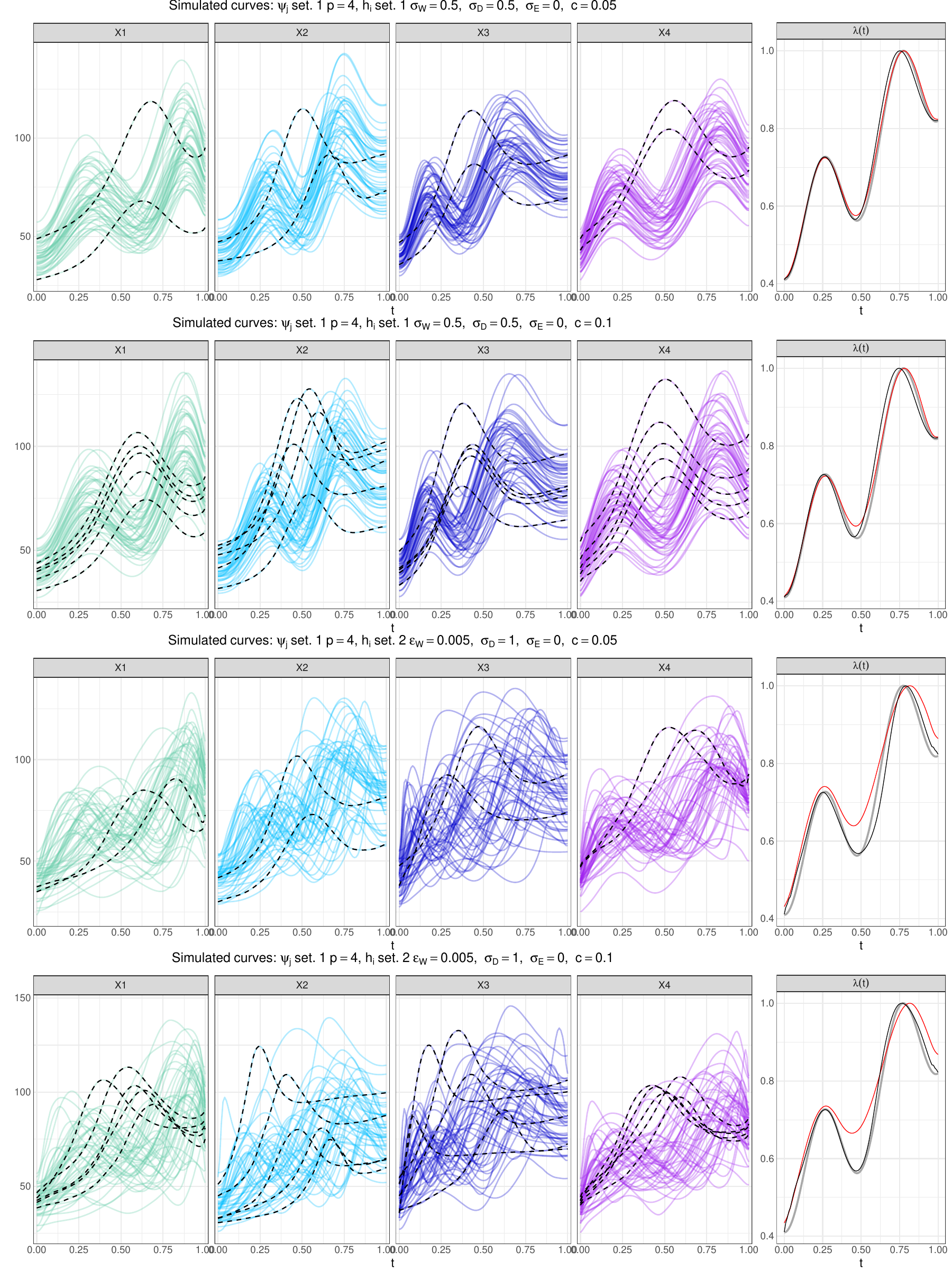}\end{center}
\caption{Four simulated data sets under model (\ref{sim_model}) with contamination of $100c\%$ of the curves ($c=0.05$ or $0.1$), according to (\ref{lamb_cont}). Outlying observations are presented with black dashed lines. The fifth plot on each row represents the target function, $\lambda(t)$, as a thick gray solid line, together with the depth-based and the CM estimates (black and red lines, respectively).}\label{figs_simdata_outs}
\end{figure}

\begin{table}[h!]
\hspace*{-1cm}{\bf Simulations with outliers.} $\mathbf{\psi}$ {\bf setting 1,} $\mathbf{h_i}$ {\bf setting 1.} $\quad\mathbf{n=50}, \quad\mathbf{p=4}$ \vspace{0.3cm}\\
\tiny{
\begin{tabular}{cc|cc|cc|cc}
    {\bf $\mathbf{h_i}$ setting 1}&&\multicolumn{6}{l}{{\bf Additive noise: $\sigma_E=0$. Contamination rate: $c=0$}}\\
    \hline
  Warping & Nuisance  &\multicolumn{2}{l|}{LISE $\times 10^3$}&\multicolumn{2}{l|}{HMISE$\times 10^3$}&\multicolumn{2}{l}{XMISE}\\
   distortion & distortion & $\hat{\lambda}$& $\hat{\lambda}_{CM}$& $\hat{h}_i$&$\hat{h}_{i,CM}$& $\hat{X}_{ij}$&$\hat{X}_{ij,CM}$\\
    \hline
    &$\sigma_D\!=\!0$&2.04  ( 0.63 )&\bf{ 0.06  ( 0.04 )}&\bf{ 0.02  ( 0.02 )}&0.05  ( 0.04 )&\bf{ 0.11  ( 0.03 )}&0.28  ( 0.07 )\\
    \cline{2-8} 
    $\sigma_W\!=\!0.5$&$\sigma_D\!=\!0.5$&0.91  ( 0.8 )&\bf{ 0.08  ( 0.05 )}&\bf{ 0.13  ( 0.03 )}&0.21  ( 0.06 )&\bf{ 6.7  ( 0.79 )}&6.97  ( 0.78 )\\
    \cline{2-8} 
    &$\sigma_D\!=\!1$&\bf{ 0.32  ( 0.4 )}&0.99  ( 0.47 )&1.73  ( 0.31 )&\bf{ 1.54  ( 0.26 )}&\bf{ 74.97  ( 7.25 )}&78.77  ( 8.48 )\\
     \hline
    &$\sigma_D\!=\!0$&\bf{ 0.5  ( 0.88 )}&1.15  ( 0.94 )&\bf{ 0.39  ( 0.47 )}&1.22  ( 0.71 )&\bf{ 1.65  ( 1.03 )}&16.59  ( 8.54 )\\
     \cline{2-8} 
   $\sigma_W\!=\!1$&$\sigma_D\!=\!0.5$&\bf{ 0.38  ( 0.42 )}&1.16  ( 0.8 )&\bf{ 0.4  ( 0.32 )}&1.28  ( 0.61 )&\bf{ 8.02  ( 1.19 )}&23.41  ( 8.63 )\\
    \cline{2-8} 
   &$\sigma_D\!=\!1$&\bf{ 0.86  ( 0.97 )}&3.17  ( 1.1 )&\bf{ 1.84  ( 0.49 )}&2.84  ( 0.66 )&\bf{ 71.67  ( 6.25 )}&100.16  ( 11.93 )\\
    \hline
    \end{tabular}}\vspace{0.3cm}\\
\tiny{
 \begin{tabular}{cc|cc|cc|cc}
     \bf{ $\mathbf{h_i}$ setting 1}&&\multicolumn{6}{l}{\bf{ Additive noise: $\sigma_E=0$. Contamination rate: $c=0.05$}}\\
    \hline
  Warping & Nuisance  &\multicolumn{2}{l|}{LISE $\times 10^3$}&\multicolumn{2}{l|}{HMISE$\times 10^3$}&\multicolumn{2}{l}{XMISE}\\
   distortion & distortion & $\hat{\lambda}$& $\hat{\lambda}_{CM}$& $\hat{h}_i$&$\hat{h}_{i,CM}$& $\hat{X}_{ij}$&$\hat{X}_{ij,CM}$\\
    \hline
    &$\sigma_D\!=\!0$&2.66  ( 0.95 )&\bf{ 0.11  ( 0.06 )}&\bf{ 0.3  ( 0.02 )}&1.3  ( 0.07 )&8.27  ( 0.89 )&\bf{ 3.36  ( 0.36 )}\\
    \cline{2-8} 
    $\sigma_W\!=\!0.5$&$\sigma_D\!=\!0.5$&1.08  ( 0.94 )&\bf{ 0.12  ( 0.08 )}&\bf{ 0.42  ( 0.06 )}&1.48  ( 0.16 )&14.57  ( 1.37 )&\bf{ 9.57  ( 1.01 )}\\
    \cline{2-8} 
    &$\sigma_D\!=\!1$&\bf{ 0.24  ( 0.26 )}&1.08  ( 0.42 )&\bf{ 1.91  ( 0.33 )}&2.06  ( 0.34 )&80.37  ( 6.99 )&\bf{ 79.97  ( 8.53 )}\\
     \hline
    &$\sigma_D\!=\!0$&\bf{ 0.51  ( 0.71 )}&1.29  ( 0.83 )&\bf{ 0.63  ( 0.3 )}&2.23  ( 0.66 )&\bf{ 9.21  ( 1.3 )}&23.08  ( 8.7 )\\
     \cline{2-8} 
   $\sigma_W\!=\!1$&$\sigma_D\!=\!0.5$&\bf{ 0.39  ( 0.36 )}&1.33  ( 0.9 )&\bf{ 0.69  ( 0.29 )}&2.25  ( 0.67 )&\bf{ 15.61  ( 1.61 )}&30.59  ( 10.57 )\\
    \cline{2-8} 
   &$\sigma_D\!=\!1$&\bf{ 0.62  ( 0.58 )}&3.4  ( 1.29 )&\bf{ 2.04  ( 0.41 )}&3.53  ( 0.74 )&\bf{ 77.69  ( 6.81 )}&104.87  ( 12.53 )\\
    \hline
    \end{tabular}}\vspace{0.3cm}\\
\tiny{
\begin{tabular}{cc|cc|cc|cc}
    \bf{ $\mathbf{h_i}$ setting 1}&&\multicolumn{6}{l}{\bf{ Additive noise: $\sigma_E=0$. Contamination rate: $c=0.1$}}\\    \hline
  Warping & Nuisance  &\multicolumn{2}{l|}{LISE $\times 10^3$}&\multicolumn{2}{l|}{HMISE$\times 10^3$}&\multicolumn{2}{l}{XMISE}\\
   distortion & distortion & $\hat{\lambda}$& $\hat{\lambda}_{CM}$& $\hat{h}_i$&$\hat{h}_{i,CM}$& $\hat{X}_{ij}$&$\hat{X}_{ij,CM}$\\
    \hline
    &$\sigma_D\!=\!0$ &3.19  ( 1.27 )&\bf{ 0.27  ( 0.07 )}&\bf{ 0.74  ( 0.04 )}&3.12  ( 0.11 )&19.82  ( 1.35 )&\bf{ 9.3  ( 0.93 )}\\
     \cline{2-8}
    $\sigma_W\!=\!0.5$&$\sigma_D\!=\!0.5$&1.23  ( 1.03 )&\bf{ 0.27  ( 0.1 )}&\bf{ 0.86  ( 0.09 )}&3.16  ( 0.2 )&26.09  ( 1.82 )&\bf{ 14.38  ( 1.23 )}\\
     \cline{2-8}
    &$\sigma_D\!=\!1$&\bf{ 0.29  ( 0.35 )}&1.42  ( 0.56 )&\bf{ 2.21  ( 0.34 )}&2.69  ( 0.38 )&89.41  ( 5.94 )&\bf{ 83.71  ( 7.4 )}\\
     \hline
    &$\sigma_D\!=\!0$&\bf{ 0.46  ( 0.61 )}&1.59  ( 0.93 )&\bf{ 1.02  ( 0.33 )}&3.42  ( 0.78 )&\bf{ 20.56  ( 1.66 )}&32.72  ( 10.48 )\\
     \cline{2-8}
   $\sigma_W\!=\!1$&$\sigma_D\!=\!0.5$&\bf{ 0.5  ( 0.66 )}&1.8  ( 0.96 )&\bf{ 1.14  ( 0.33 )}&3.31  ( 0.81 )&\bf{ 26.81  ( 2 )}&42.68  ( 12.94 )\\
    \cline{2-8}
   &$\sigma_D\!=\!1$&\bf{ 0.62  ( 0.82 )}&4.11  ( 1.43 )&\bf{ 2.41  ( 0.52 )}&3.99  ( 0.73 )&\bf{ 86.48  ( 6.5 )}&114.29  ( 15.35 )\\
    \hline
     \end{tabular}}
    \caption{LISE, HMISE and XMISE means and standard deviations over $N=100$ simulation runs for $p=4$ deterministic component distortion functions ($\psi$ setting 1), individual warping functions generated under setting 1, and $n=50$, with different level of outlier contamination. For easy reference, we also include the case $c=0$, which replicate the results of Table \ref{res1} (cases for which $\sigma_E=0$.)}\label{res_outliers}
\end{table}    
       
\begin{table}[h!]
\hspace*{-1cm}{\bf Simulations with outliers.} $\mathbf{\psi}$ {\bf setting 1,} $\mathbf{h_i}$ {\bf setting 2.} $\quad\mathbf{n=50}, \quad\mathbf{p=4}$ \vspace{0.3cm}\\       
        \tiny{
\begin{tabular}{cc|cc|cc|cc}
   \bf{ $\mathbf{h_i}$ setting 2}&&\multicolumn{6}{l}{{\bf Additive noise: $\sigma_E=0$. Contamination rate: $c=0$}}\\    \hline
  Warping & Nuisance  &\multicolumn{2}{l|}{LISE $\times 10^3$}&\multicolumn{2}{l|}{HMISE$\times 10^3$}&\multicolumn{2}{l}{XMISE}\\
   distortion & distortion & $\hat{\lambda}$& $\hat{\lambda}_{CM}$& $\hat{h}_i$&$\hat{h}_{i,CM}$& $\hat{X}_{ij}$&$\hat{X}_{ij,CM}$\\
    \hline
    &$\sigma_D\!=\!0$&0.85  ( 0.65 )&\bf{ 0.82  ( 0.47 )}&\bf{ 0.41  ( 0.28 )}&0.81  ( 0.4 )&\bf{ 2.89  ( 2.21 )}&10.93  ( 5.69 )\\
     \cline{2-8}
    $\varepsilon_W\!=\!0.005$&$\sigma_D\!=\!0.5$&1.38  ( 1.04 )&1.01  ( 0.62 )&\bf{ 0.49  ( 0.28 )}&1.02  ( 0.45 )&\bf{ 10.28  ( 1.93 )}&17.59  ( 4.45 )\\
     \cline{2-8}
    &$\sigma_D\!=\!1$&\bf{ 1.38  ( 0.87 )}&2.52  ( 1.03 )&\bf{ 1.96  ( 0.46 )}&2.43  ( 0.55 )&\bf{ 72.13  ( 5.71 )}&88.62  ( 9.25 )\\
     \hline
    &$\sigma_D\!=\!0$&\bf{ 1.93  ( 1.04 )}&3.28  ( 1.41 )&\bf{ 1.14  ( 0.58 )}&4.13  ( 1.5 )&\bf{ 9.87  ( 5.36 )}&53.25  ( 19.87 )\\
     \cline{2-8}
   $\varepsilon_W\!=\!0.0075$&$\sigma_D\!=\!0.5$&\bf{ 1.91  ( 1.01 )}&3.39  ( 1.38 )&\bf{ 1.19  ( 0.57 )}&4.35  ( 1.36 )&\bf{ 16.27  ( 3.68 )}&60.3  ( 18.13 )\\
    \cline{2-8}
   &$\sigma_D\!=\!1$&\bf{ 2.33  ( 1.44 )}&5.87  ( 1.79 )&\bf{ 2.61  ( 0.6 )}&5.83  ( 1.4 )&\bf{ 73.26  ( 7.98 )}&118.46  ( 17.85 )\\
    \hline
    \end{tabular}}\vspace{0.3cm}\\
    \tiny{
\begin{tabular}{cc|cc|cc|cc}
   \bf{ $\mathbf{h_i}$ setting 2}&&\multicolumn{6}{l}{\bf{ Additive noise: $\sigma_E=0$. Contamination rate: $c=0.05$}}\\    \hline
  Warping & Nuisance  &\multicolumn{2}{l|}{LISE $\times 10^3$}&\multicolumn{2}{l|}{HMISE$\times 10^3$}&\multicolumn{2}{l}{XMISE}\\
   distortion & distortion & $\hat{\lambda}$& $\hat{\lambda}_{CM}$& $\hat{h}_i$&$\hat{h}_{i,CM}$& $\hat{X}_{ij}$&$\hat{X}_{ij,CM}$\\
    \hline
    &$\sigma_D\!=\!0$&\bf 0.81  ( 0.69 )&0.91  ( 0.53 )&\bf 0.71  ( 0.29 )&1.73  ( 0.49 )&\bf 10.93  ( 2.21 )&16.35  ( 4.7 )\\
     \cline{2-8}
    $\varepsilon_W\!=\!0.005$&$\sigma_D\!=\!0.5$&\bf 0.79  ( 0.59 )&0.93  ( 0.46 )&\bf 0.75  ( 0.21 )&1.91  ( 0.47 )&\bf 17.88  ( 2.13 )&23.18  ( 6.3 )\\
     \cline{2-8}
    &$\sigma_D\!=\!1$&\bf 1.14  ( 0.99 )&2.92  ( 0.93 )&\bf 2.2  ( 0.45 )&2.99  ( 0.57 )&\bf 78.94  ( 6.8 )&96.69  ( 12.63 )\\
     \hline
    &$\sigma_D\!=\!0$&\bf 1.93  ( 1.04 )&3.28  ( 1.41 )&\bf 1.14  ( 0.58 )&4.13  ( 1.5 )&\bf 9.87  ( 5.36 )&53.25  ( 19.87 )\\
     \cline{2-8}
   $\varepsilon_W\!=\!0.0075$&$\sigma_D\!=\!0.5$&\bf 2  ( 1.22 )&3.92  ( 1.56 )&\bf 1.55  ( 0.6 )&5.14  ( 1.61 )&\bf 23.83  ( 4.17 )&69.57  ( 19.72 )\\
    \cline{2-8}
   &$\sigma_D\!=\!1$&\bf 2.38  ( 1.49 )&6.12  ( 1.87 )&\bf 2.88  ( 0.71 )&6.55  ( 1.47 )&\bf 79.4  ( 7.58 )&127.26  ( 21.32 )\\
    \hline
    \end{tabular}}\vspace{0.3cm}\\
    \tiny{
\begin{tabular}{cc|cc|cc|cc}
   \bf{ $\mathbf{h_i}$ setting 2}&&\multicolumn{6}{l}{\bf{ Additive noise: $\sigma_E=0$. Contamination rate: $c=0.1$}}\\    \hline
  Warping & Nuisance  &\multicolumn{2}{l|}{LISE $\times 10^3$}&\multicolumn{2}{l|}{HMISE$\times 10^3$}&\multicolumn{2}{l}{XMISE}\\
   distortion & distortion & $\hat{\lambda}$& $\hat{\lambda}_{CM}$& $\hat{h}_i$&$\hat{h}_{i,CM}$& $\hat{X}_{ij}$&$\hat{X}_{ij,CM}$\\
    \hline   
    &$\sigma_D\!=\!0$& \bf 0.84  ( 0.76 )&1.25  ( 0.6 )&\bf 1.11  ( 0.32 )&3.03  ( 0.58 )&\bf 22.6  ( 3.2 )&29.09  ( 8.92 )\\
     \cline{2-8}
    $\varepsilon_W\!=\!0.005$&$\sigma_D\!=\!0.5$&\bf 0.92  ( 0.71 )&1.37  ( 0.62 )&\bf 1.25  ( 0.36 )&2.99  ( 0.56 )&\bf 29.03  ( 2.9 )&35.19  ( 9.36 )\\
     \cline{2-8}
    &$\sigma_D\!=\!1$&\bf 1.17  ( 0.86 )&3.59  ( 1.06 )&\bf2.63  ( 0.53 )&3.87  ( 0.77 )&\bf88.32  ( 7 )&107  ( 12.81 )\\
     \hline
   &$\sigma_D\!=\!0$&\bf1.93  ( 1.31 )&4.49  ( 1.89 )&\bf2.05  ( 0.73 )&6.41  ( 1.6 )&\bf29.26  ( 5.75 )&80.92  ( 23.59 )\\
     \cline{2-8}
   $\varepsilon_W\!=\!0.0075$&$\sigma_D\!=\!0.5$&\bf1.92  ( 1.24 )&4.69  ( 1.75 )&\bf1.96  ( 0.6 )&6.13  ( 1.59 )&\bf34.93  ( 4.35 )&86.22  ( 22.61 )\\
    \cline{2-8}
   &$\sigma_D\!=\!1$&\bf 1.91  ( 1.16 )&7.3  ( 2.14 )&\bf 3.26  ( 0.7 )&7.18  ( 1.41 )&\bf88.83  ( 7.49 )&141.03  ( 20.74 )\\
    \hline
    \end{tabular}}
    \caption{LISE, HMISE and XMISE means and standard deviations over $N=100$ simulation runs for $p=4$ deterministic component distortion functions ($\psi$ setting 1), individual warping functions generated under setting 2,  and $n=50$, with different level of outlier contamination. For easy reference, we also include the case $c=0$, which replicate the results of Table \ref{res1} (cases for which $\sigma_E=0$.)}\label{res_outliers2}
\end{table}

\section*{Data availability statement}
Data derived from public domain resources: The data sets used in Section \ref{appli} are available in Eumetsat User Portal at \url{https://doi.org/10.15770/EUM_SAF_OSI_0022} and in Eurostat Data Portal at \url{https://doi.org/10.2908/DEMO_FORDAGEC}.

\bibliographystyle{plainnat}
\bibliography{Biblio}

\end{document}